\definecolor{redd}{rgb}{0.7,0.35,0}
\definecolor{bluee}{rgb}{0,0.4,0.7}
\definecolor{gree}{rgb}{0.25,0.53,0.15}
\newcommand{\ind}{\hspace{\algorithmicindent}}
\newcommand{\ret}{\textbf{return} }
\newcommand{\Match}[1]{\textbf{match} #1 \textbf{with}}
\newcommand{\Let}[2]{\textbf{let} #1 $=$ #2 \textbf{in}}
\newcommand{\seq}[1]{\widetilde{#1}}
\newcommand{\hccs}{\mathcal{H}}
\newcommand{\hc}{\mathit{hc}}
\newcommand{\fvs}{\mathit{fvs}}
\newcommand{\pvs}{\mathit{pvs}}
\newcommand{\dom}{\mathrm{dom}}
\newcommand{\Pvar}{\mathcal{P}}
 \newcommand{\Subst}{\Theta}
\newcommand{\subst}{\theta}
\newcommand{\SubstAtom}{\Theta_{\mathit{atom}}}
\newcommand{\maxi}{\textit{maximize}}
\newcommand{\mini}{\textit{minimize}}
\newcommand{\TINT}{\mathtt{int}}
\newcommand{\SUM}{\mathtt{sum}}
\newcommand{\SUMP}{\mathtt{sum}'}
\newcommand{\SUMT}{\mathtt{sum\_t}}
\newcommand{\TRUE}{\mathtt{true}}
\newcommand{\FALSE}{\mathtt{false}}
\newcommand{\LET}{\texttt{let}}
\newcommand{\IN}{\texttt{in}}
\newcommand{\OP}{\mathit{op}}
\newcommand{\IFZ}{\texttt{ifz}}
\newcommand{\THEN}{\texttt{then}}
\newcommand{\ELSE}{\texttt{else}}
\newcommand{\set}[1]{\left\{#1\right\}}
\newcommand{\arity}[1]{\mathit{ar}(#1)}
\newcommand{\length}[1]{\left| #1 \right|}
\newcommand{\sem}[1]{\llbracket #1 \rrbracket}
\newcommand{\Implies}{\Rightarrow}
\newcommand{\RED}{\longrightarrow_D}
\newcommand{\REDS}{\longrightarrow_D^*}
\newcommand{\rt}[2]{\left\{ #1 \mid #2 \right\}}
\newcommand\rulesp{\vspace*{2ex}}
\newcommand\MAX{\uparrow}
\newcommand\MIN{\downarrow}
\newcommand{\GEN}{\mathit{Gen}}
\newcommand{\qedex}{\qed}
\spnewtheorem*{proofsketch}{Proof Sketch}{\itshape}{\rmfamily}
\begin{document}

\title{Refinement Type Inference via Horn Constraint Optimization}
\author{Kodai Hashimoto and Hiroshi Unno}
\institute{University of Tsukuba\\
\email{\{kodai, uhiro\}@logic.cs.tsukuba.ac.jp}
}

\maketitle

\begin{abstract}
We propose a novel method for inferring refinement types of
higher-order functional programs.  The main advantage of the proposed
method is that it can infer maximally preferred (i.e., Pareto optimal)
refinement types with respect to a user-specified preference order.
The flexible optimization of refinement types enabled by the proposed
method paves the way for interesting applications, such as inferring
most-general characterization of inputs for which a given program
satisfies (or violates) a given safety (or termination) property.  Our
method reduces such a type optimization problem to a Horn constraint
optimization problem by using a new refinement type system that can
flexibly reason about non-determinism in programs.  Our method then
solves the constraint optimization problem by repeatedly improving a
current solution until convergence via template-based invariant
generation.  We have implemented a prototype inference system based on
our method, and obtained promising results in preliminary experiments.
\end{abstract}

\section{Introduction}
\label{sec:intro}

Refinement types~\cite{Freeman1991,Xi1999} have been applied to safety
verification of higher-order functional programs.  Some existing
tools~\cite{Unno2008,Rondon2008,Unno2009,Kobayashi2011b,Jhala2011,Terauchi2010,Unno2013}
enable fully automated verification by refinement type inference based
on invariant generation techniques such as abstract interpretation,
predicate abstraction, and CEGAR.  The goal of these tools is to infer
refinement types precise enough to verify a given safety
specification.  Therefore, types inferred by these tools are often too
specific to the particular specification, and hence have limited
applications.

To remedy the limitation, we propose a novel refinement type inference
method that can infer maximally preferred (i.e., Pareto optimal)
refinement types with respect to a user-specified preference order.
For example, let us consider the following summation function (in
OCaml syntax)
\begin{alltt}
  let rec sum x = if x = 0 then 0 else x + sum (x - 1)
\end{alltt}
A refinement type of $\SUM$ is of the form $(x:\rt{x:\TINT}{P(x)}) \to
\rt{y:\TINT}{Q(x,y)}$.  Here, $P(x)$ and $Q(x,y)$ respectively
represent pre and post conditions of $\SUM$.  Note that the
postcondition $Q(x,y)$ can refer to the argument $x$ as well as the
return value $y$.  Suppose that we want to infer a maximally-weak
predicate for $P$ and maximally-strong predicate for $Q$ within a
given underlying theory.  Our method allows us to specify such
preferences as the following constraints for type optimization
\[
\maxi(P), \qquad
\mini(Q).
\]
Here, $\maxi(P)$ (resp. $\mini(Q))$ means that the set of the models
of $P(x)$ (resp. $Q(x,y)$) should be maximized (resp. minimized).  Our
method then infers a Pareto optimal refinement type with respect to
the given preferences.

In general, however, this kind of multi-objective optimization
involves a trade-off among the optimization constraints.  In the above
example, $P$ may not be weakened without also weakening $Q$.  Hence,
there often exist multiple optima.  Actually, all the following are
Pareto optimal refinement types of $\SUM$.\footnote{Here, we use
  quantifier-free linear arithmetic as the underlying theory and
  consider only atomic predicates for $P$ and $Q$.}
\begin{eqnarray}
(x:\rt{x:\TINT}{x=0}) &\to& \rt{y:\TINT}{y=x} \\
(x:\rt{x:\TINT}{\TRUE}) &\to& \rt{y:\TINT}{y \geq 0} \\
(x:\rt{x:\TINT}{x < 0}) &\to& \rt{y:\TINT}{\FALSE}
\end{eqnarray}

Our method further allows us to specify a priority order on the
predicate variables $P$ and $Q$.  If $P$ is given a higher priority
over $Q$ (we write $P \sqsubset Q$), our method infers the type (2),
whereas we obtain the type (3) if $Q \sqsubset P$.
Interestingly, (3) expresses that $\SUM$ is non-terminating for any
input $x < 0$.

The flexible optimization of refinement types enabled by our method
paves the way for interesting applications, such as inferring
most-general characterization of inputs for which a given program
satisfies (or violates) a given safety (or termination) property.
Furthermore, our method can infer an upper bound of the number of
recursive calls if the program is terminating, and can find a
minimal-length counterexample path if the program violates a safety
property.

Internally, our method reduces such a refinement type optimization
problem to a constraint optimization problem where the constraints are
expressed as existentially quantified Horn clauses over predicate
variables~\cite{Unno2013,Beyene2013,Kuwahara2015}.
The constraint generation here is based on a new refinement type
system that can reason about (angelic and demonic) non-determinism in
programs.  Our method then solves the constraint optimization problem
by repeatedly improving a current solution until convergence.  The
constraint optimization here is based on an extension of
template-based invariant generation~\cite{Colon2003,Gulwani2008a} to
existentially quantified Horn clause constraints and prioritized
multi-objective optimization.

The rest of the paper is organized as follows.
Sections~\ref{sec:lang} and \ref{sec:type} respectively formalize our
target language and its refinement type system.  The applications of
refinement type optimization are explained in Section~\ref{sec:app}.
Section~\ref{sec:opt} formalizes Horn constraint optimization problems
and the reduction from type optimization problems.
Section~\ref{sec:alg} proposes our Horn constraint optimization
method.  Section~\ref{sec:exp} reports on a prototype implementation
of our method and the results of preliminary experiments.  We compare
our method with related work in Section~\ref{sec:rel} and conclude the
paper in Section~\ref{sec:concl}.

\section{Target Language $L$}
\label{sec:lang}

This section introduces a higher-order call-by-value functional
language $L$, which is the target of our refinement type optimization.
The syntax is defined as follows.
\[
\begin{array}{rrl}
\text{(programs)}
 &D ::=& \set{f_i\ \seq{x}_i = e_i}_{i=1}^m \\
\text{(expressions)}
 &e ::=& x \mid e\ v \mid n \mid \OP(v_1,\dots,v_{\arity{\OP}}) \mid
         \IFZ\ v\ \THEN\ e_1\ \ELSE\ e_2 \\
&\mid\ & \LET\ x= e_1\ \IN\ e_2 \mid \LET\ x= *_{\forall}\ \IN\ e \mid \LET\ x= *_{\exists}\ \IN\ e \\
\text{(values)}
 &v ::=& x \mid x\ \seq{v} \mid n \\
\text{(eval. contexts)}
 &E ::=& [\;] \mid E\ v \mid \LET\ x\ =\ E\ \IN\ e
\end{array}
\]
Here, $x$ and $f$ are meta-variables ranging over variables.  $n$ and
$\OP$ respectively represent integer constants and operations such as
$+$ and $\geq$.  $\arity{\OP}$ expresses the arity of $\OP$.  We write
$\seq{x}$ (resp. $\seq{v}$) for a sequence of variables
(resp. values).  For simplicity of the presentation, the language $L$
has integers as the only data type.  We encode Boolean values $\TRUE$
and $\FALSE$ respectively as non-zero integers and $0$.  A program is
expressed as a set $\set{f_i\ \seq{x}_i = e_i}_{i=1}^m$ of function
definitions.  We define $\dom(\set{f_i\ \seq{x}_i =
  e_i}_{i=1}^m)=\set{f_1,\dots,f_m}$.
We assume that a value $x\ \seq{v}$ satisfies $1 \leq \length{\seq{v}}
< \arity{f}$, where $\length{\seq{v}}$ represents the length of the
sequence $\seq{v}$.

The call-by-value operational semantics of $L$ is given in
Figure~\ref{fig:semantics}.  Here, $\sem{\OP}$ represents the integer
function denoted by $\OP$.  Both expressions $\LET\ x=
*_{\forall}\ \IN\ e$ and $\LET\ x= *_{\exists}\ \IN\ e$ generate a
random integer $n$, bind $x$ to it, and evaluate $e$.  They are,
however, interpreted differently in our refinement type system (see
Section~\ref{sec:type}).  These expressions are used to model external
functions without definitions and non-deterministic behavior caused by
external inputs (e.g., user inputs, interrupts, and so on).  We write
$\REDS$ to denote the reflexive and transitive closure of $\RED$.

\begin{figure}[t]
\begin{minipage}[t]{0.5\hsize}

\infax[E-\textsc{Op}]{E[\OP(\seq{v})] \RED E[\sem{\OP}(\seq{v})]}
\rulesp

\infrule[E-\textsc{App}]
  {f\ \seq{x}=e \in D \andalso \length{\seq{x}} = \length{\seq{v}}}
  {E[f\ \seq{v}] \RED E[[\seq{v}/\seq{x}]e]}
\rulesp

\infax[E-\textsc{Let}]{E[\LET\ x= v\ \IN\ e] \RED E[[v/x]e]}

\end{minipage}
\begin{minipage}[t]{0.5\hsize}

\infax[E-\textsc{Rand}$\exists$]{E[\LET\ x= *_{\forall}\ \IN\ e] \RED E[[n/x]e]}
\rulesp

\infax[E-\textsc{Rand}$\forall$]{E[\LET\ x= *_{\exists}\ \IN\ e] \RED E[[n/x]e]}
\rulesp

\infrule[E-\textsc{If}]
  {\mbox{ if } n=0 \mbox{ then } i=1 \mbox{ else } i=2}
  {E[\IFZ\ n\ \THEN\ e_1\ \ELSE\ e_2] \RED E[e_i]}

\end{minipage}
\caption{The operational semantics of our language $L$}
\label{fig:semantics}
\end{figure}

\section{Refinement Type System for $L$}
\label{sec:type}

In this section, we introduce a refinement type system for $L$ that
can reason about non-determinism in programs.  We then formalize
refinement type optimization problems (in Section~\ref{sec:type_opt}),
which generalize ordinary type inference problems.

The syntax of our refinement type system is defined as follows.
\[
\begin{array}{rrl}
\text{(refinement types)}
 &\tau ::=& \rt{x}{\phi} \mid (x:\tau_1)\to \tau_2 \\
\text{(type environments)}
 &\Gamma ::=& \emptyset \mid \Gamma,x:\tau \mid \Gamma,\phi \\
\text{(formulas)}
 &\phi ::=& t_1 \leq t_2 \mid \top \mid \bot \mid \lnot \phi \mid \phi_1 \land \phi_2 \mid \phi_1 \lor \phi_2 \mid \phi_1 \Implies \phi_2 \\
\text{(terms)}
 &t ::=& n \mid x \mid t_1+t_2 \mid n \cdot t \\
\text{(predicates)}
 &p ::=& \lambda \seq{x}.\phi
\end{array}
\]
An integer refinement type $\rt{x}{\phi}$ equipped with a formula
$\phi$ for type refinement represents the type of integers $x$ that
satisfy $\phi$.  The scope of $x$ is within $\phi$.  We often
abbreviate $\rt{x}{\top}$ as $\TINT$.  A function refinement type
$(x:\tau_1)\to\tau_2$ represents the type of functions that take an
argument $x$ of the type $\tau_1$ and return a value of the type
$\tau_2$.  Here, $\tau_2$ may depend on the argument $x$ and the scope
of $x$ is within $\tau_2$.  For example, $(x:\TINT) \to \rt{y}{y > x}$
is the type of functions whose return value $y$ is always greater than
the argument $x$.  We often write $\fvs(\tau)$ to denote the set of
free variables occurring in $\tau$.  We define $\Gamma(x)=\tau$ if
$x:\tau \in \Gamma$ and $\dom(\Gamma)=\set{x \mid x:\tau \in \Gamma}$.

In this paper, we adopt formulas $\phi$ of the quantifier-free theory
of linear integer arithmetic (QFLIA) for type refinement.  We write
$\models \phi$ if a formula $\phi$ is valid in QFLIA.  Formulas $\top$
and $\bot$ respectively represent the tautology and the contradiction.
Note that atomic formulas $t_1 < t_2$ (resp. $t_1 = t_2$) can be
encoded as $t_1 + 1 \leq t_2$ (resp. $t_1 \leq t_2 \land t_2 \leq
t_1$) in QFLIA.

\begin{figure}[t]
\typicallabel{Rand$\forall$}
\begin{minipage}[t]{0.4\hsize}
\infrule[Prog]
  {\Gamma \vdash \lambda \seq{x}_i.e_i : \Gamma(f_i) \\
   (\mbox{for }i \in \set{1,\dots,m})}
  {\vdash \set{f_i\ \seq{x}_i = e_i}_{i=1}^m : \Gamma}
\rulesp

\infrule[IVar]
  {\Gamma(x)=\rt{\nu}{\phi}}
  {\Gamma \vdash x : \rt{\nu}{\nu=x}}
\rulesp

\infrule[FVar]
  {\Gamma(x)=(\nu:\tau_1) \to \tau_2}
  {\Gamma \vdash x : (\nu:\tau_1) \to \tau_2}
\rulesp

\infrule[Abs]
  {\Gamma,x:\tau_1 \vdash e : \tau_2}
  {\Gamma \vdash \lambda x. e : (x:\tau_1) \to \tau_2}
\rulesp

\infrule[App]
  {\Gamma \vdash e : (x:\tau_1) \to \tau_2 \\
   \Gamma \vdash v : \tau_1}
  {\Gamma \vdash e\ v : [v/x]\tau_2}
\rulesp

\infax[Int]
  {\Gamma \vdash n : \rt{\nu}{\nu=n}}
\rulesp

\infrule[Sub]
  {\Gamma \vdash e : \tau' \andalso
   \Gamma \vdash \tau' <: \tau}
  {\Gamma \vdash e : \tau}
\rulesp

\infrule[ISub]
  {\models \sem{\Gamma} \land \phi_1 \Implies \phi_2}
  {\Gamma \vdash \rt{\nu}{\phi_1} <: \rt{\nu}{\phi_2}}

\end{minipage}
\begin{minipage}[t]{0.6\hsize}

\infrule[Let]
  {\Gamma \vdash e_1 : \tau_1 \\
   \Gamma,x : \tau_1 \vdash e_2 : \tau_2 \andalso
   x \not\in \fvs(\tau_2)}
  {\Gamma \vdash \LET\ x= e_1\ \IN\ e_2 : \tau_2}
\rulesp

\infrule[Rand$\forall$]
  {\Gamma,x : \TINT \vdash e : \tau \andalso
   x \not\in \fvs(\tau)}
  {\Gamma \vdash \LET\ x= *_{\forall}\ \IN\ e : \tau}
\rulesp

\infrule[Rand$\exists$]
  {\fvs(\phi) \subseteq \dom(\Gamma) \cup \set{x} \\
   \models \sem{\Gamma} \Implies \exists x. \phi \\
   \Gamma,x : \rt{x}{\phi} \vdash e : \tau \andalso
   x \not\in \fvs(\tau)}
  {\Gamma \vdash \LET\ x= *_{\exists}\ \IN\ e : \tau}
\rulesp

\infrule[If]
  {\Gamma,v=0 \vdash e_1 : \tau \\
   \Gamma,v \neq 0 \vdash e_2 : \tau}
  {\Gamma \vdash \IFZ\ v\ \THEN\ e_1\ \ELSE\ e_2 : \tau}
\rulesp

\infrule[Op]
  {\sem{\OP}^{\mathrm{Ty}}=(x_1:\tau_1) \to \dots \to (x_m:\tau_m) \to \tau \\
   \sigma_j =[v_1/x_1,\dots,v_j/x_j] \\
   \Gamma \vdash v_i : \sigma_{i-1} \tau_i \quad (\mbox{for }i \in \set{1,\dots,m})}
  {\Gamma \vdash \OP(v_1,\dots,v_m) : \sigma_m \tau}
\rulesp

\infrule[FSub]
  {\Gamma \vdash \tau_1' <: \tau_1 \\
   \Gamma,\nu:\tau_1' \vdash \tau_2 <: \tau_2'}
  {\Gamma \vdash (\nu:\tau_1) \to \tau_2 <: (\nu:\tau_1') \to \tau_2'}

\end{minipage}

\caption{The inference rules of our refinement type system}
\label{fig:type}
\end{figure}

The inference rules of our refinement type system are shown in
Figure~\ref{fig:type}.  Here, a type judgment $\vdash D : \Gamma$
means that a program $D$ is well-typed under a refinement type
environment $\Gamma$.  A type judgment $\Gamma \vdash e : \tau$
indicates that an expression $e$ has a refinement type $\tau$ under
$\Gamma$.  A subtype judgment $\Gamma \vdash \tau_1 <: \tau_2$ states
that $\tau_1$ is a subtype of $\tau_2$ under $\Gamma$.  $\sem{\Gamma}$
occurring in the rules \rn{ISub} and \rn{Rand$\exists$} is defined by
$\sem{\emptyset}=\top$, $\sem{\Gamma,x:\rt{\nu}{\phi}}=\sem{\Gamma}
\land [x/\nu]\phi$, $\sem{\Gamma,x:(\nu:\tau_1) \to
  \tau_2}=\sem{\Gamma}$, and $\sem{\Gamma,\phi}=\sem{\Gamma} \land
\phi$.  In the rule \rn{Op}, $\sem{\OP}^{\mathrm{Ty}}$ represents a
refinement type of $\OP$ that soundly abstracts the behavior of the
function $\sem{\OP}$.  For example, $\sem{+}^{\mathrm{Ty}}=(x:\TINT)
\to (y:\TINT) \to \rt{z}{z=x+y}$.
All the rules except \rn{Rand$\forall$} and \rn{Rand$\exists$} for
random integer generation are essentially the same as the previous
ones~\cite{Unno2009}.  The rule \rn{Rand$\forall$} requires $e$ to
have $\tau$ for \emph{any} randomly generated integer $x$.  Therefore,
$e$ is type-checked against $\tau$ under a type environment that
assigns $\TINT$ to $x$.  By contrast, the rule \rn{Rand$\exists$}
requires $e$ to have $\tau$ for \emph{some} randomly generated integer
$x$.  Hence, $e$ is type-checked against $\tau$ under a type
environment that assigns a type $\rt{x}{\phi}$ to $x$ for some $\phi$
such that $\fvs(\phi) \subseteq \dom(\Gamma) \cup \set{x}$ and
$\models \sem{\Gamma} \Implies \exists x. \phi$.
For example, $x:\TINT \vdash \LET\ y= *_{\exists}\ \IN\ x+y :
\rt{r}{r=0}$ is derivable because we can derive
$x:\TINT,y:\rt{y}{y=-x} \vdash x+y:\rt{r}{r=0}$.
Thus, our new type system allows us to reason about both angelic
$*_{\exists}$ and demonic $*_{\forall}$ non-determinism in
higher-order functional programs.

We now discuss properties of our new refinement type system.  We can
prove the following progress theorem in a standard manner.
\begin{theorem}[Progress]
\label{thm:progress}
Suppose that we have $\vdash D : \Gamma$, $\dom(\Gamma)=\dom(D)$, and
$\Gamma \vdash e : \tau$.  Then, either $e$ is a value or $e \RED e'$
for some $e'$.
\end{theorem}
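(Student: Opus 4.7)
The plan is to prove Progress by structural induction on the derivation of $\Gamma \vdash e : \tau$, with case analysis on the last typing rule applied (determined by the head form of $e$, since lambdas never appear in source expressions). Before starting the induction, I would establish a small canonical forms lemma by inversion on the value-typing rules \rn{Int}, \rn{IVar}, \rn{FVar} and the subsumption rule \rn{Sub}: (i) if $v$ is a value with $\Gamma \vdash v : \rt{\nu}{\phi}$ and $\dom(\Gamma) = \dom(D)$, then $v$ is an integer literal $n$; (ii) if $\Gamma \vdash v : (x:\tau_1) \to \tau_2$ under the same assumption on $\Gamma$, then $v$ has the form $f$ or $f\ \seq{v'}$ with $f \in \dom(D)$ and $|\seq{v'}| < \arity{f}$. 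Because \rn{Sub} and \rn{FSub} preserve the outer shape of the type, this inversion goes through despite the presence of subsumption.

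The easy cases then dispatch as follows. The base cases \rn{Int}, \rn{IVar}, and \rn{FVar} make $e$ a value outright. The \rn{Op} case applies E-Op since every argument is already a value by the syntax. For \rn{Let} with $e = \LET\ x = e_1\ \IN\ e_2$, the IH on $e_1$ gives either a reduction (extend the evaluation context by $\LET\ x = [\;]\ \IN\ e_2$) or a value (in which case E-Let fires). The rules \rn{Rand$\forall$} and \rn{Rand$\exists$} always step via the corresponding E-Rand rule, since the operational rules admit any integer $n$; the validity side-condition $\models \sem{\Gamma} \Implies \exists x. \phi$ in \rn{Rand$\exists$} is only needed for preservation, not progress. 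For \rn{If}, canonical forms forces the guard $v$ to be some $n$ (since $v$ cannot be a top-level function name in the term position of a well-formed $v = 0$), and E-If fires. Finally, \rn{Sub} is dispatched by invoking the IH directly on its premise.

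The case requiring the most care is \rn{App}, where $e = e_1\ v$ with $\Gamma \vdash e_1 : (x:\tau_1) \to \tau_2$. The IH on $e_1$ yields two subcases. If $e_1 \RED e_1'$, then $e_1\ v \RED e_1'\ v$ by extending the evaluation context with $[\;]\ v$. If instead $e_1$ is a value, canonical forms forces $e_1 = f$ or $e_1 = f\ \seq{v'}$ with $|\seq{v'}| < \arity{f}$ and $f \in \dom(D)$. Setting $\seq{v''} = \seq{v'}, v$ (or $\seq{v''} = v$ in the former subcase), either $|\seq{v''}| = \arity{f}$, in which case the defining equation $f\ \seq{x} = e \in D$ guaranteed by $\vdash D : \Gamma$ together with E-App produces a reduct, or $|\seq{v''}| < \arity{f}$, in which case $e_1\ v = f\ \seq{v''}$ is itself a value per the grammar.

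The main obstacle is setting up the canonical forms lemma carefully so that it handles both partially applied function values and the interaction with \rn{Sub}, and then using the global assumption $\dom(\Gamma) = \dom(D)$ to bridge the typing side (where $f$ is looked up in $\Gamma$) with the operational side (where E-App requires the defining equation in $D$). Once that bookkeeping is in place, every remaining case is a one-line invocation of the matching evaluation rule, so the proof reduces to an exercise in the standard style.
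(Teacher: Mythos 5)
The paper does not actually give a proof of Theorem~\ref{thm:progress} (it only remarks that it can be proved ``in a standard manner''), and your proposal is precisely that standard argument: induction on the typing derivation plus a canonical-forms lemma, with the hypothesis $\dom(\Gamma)=\dom(D)$ used to rule out stuck free variables and to supply the defining equation needed by \textsc{E-App}, and with partial applications correctly treated as values. Your case analysis is sound against the paper's grammar and rules, so this fills the omitted proof in the intended way.
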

We can also show the substitution lemma and the type preservation
theorem in a similar manner to \cite{Unno2009}.
\begin{lemma}[Substitution]
\label{lem:subst}
If $\Gamma\vdash v:\tau'$ and $\Gamma,x : \tau',\Gamma'\vdash e : \tau$,
then $\Gamma,[v/x]\Gamma'\vdash[v/x]e:[v/x]\tau$.
\end{lemma}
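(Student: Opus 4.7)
The plan is to proceed by induction on the structure of the typing derivation $\Gamma, x:\tau', \Gamma' \vdash e : \tau$, carried out simultaneously with a substitution lemma for subtyping so that the \rn{Sub} case has an induction hypothesis to appeal to. That is, in parallel I would establish: if $\Gamma\vdash v:\tau'$ and $\Gamma, x:\tau', \Gamma' \vdash \sigma_1 <: \sigma_2$, then $\Gamma, [v/x]\Gamma' \vdash [v/x]\sigma_1 <: [v/x]\sigma_2$. Standard hygiene assumptions (so that free variables of $v$ are disjoint from $\dom(\Gamma')$) are enforced by silent $\alpha$-renaming of bound variables in the binding rules.

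Before the case analysis I would prove an auxiliary semantic substitution lemma for environment interpretations: whenever $\Gamma\vdash v : \tau'$, any QFLIA entailment of the form $\models \sem{\Gamma, x:\tau', \Gamma'} \Implies \psi$ yields $\models \sem{\Gamma, [v/x]\Gamma'} \Implies [v/x]\psi$. The idea is that $\sem{\Gamma,x:\tau',\Gamma'}$ contains the refinement of $\tau'$ applied to $x$, and the hypothesis $\Gamma\vdash v:\tau'$ (combined with \rn{IVar} and \rn{ISub}) lets us discharge that refinement against the substituted term $v$ when $v$ is an integer value; when $v$ has a function type, a well-formedness invariant guarantees $x$ does not appear in any formula of $\Gamma'$ or $\tau$, so $[v/x]$ acts as the identity on all relevant formulas. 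This auxiliary fact is what feeds the side conditions of \rn{ISub} and \rn{Rand$\exists$} through the substitution.

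The case analysis is then largely routine. \rn{IVar}/\rn{FVar} split on whether the named variable is the substituted $x$: when it is, we apply the assumption $\Gamma \vdash v:\tau'$, using for \rn{IVar} the easy fact that $\Gamma \vdash v : \rt{\nu}{\nu = v}$ is derivable (from the IVar/Int rule and \rn{Sub}). The rules \rn{App}, \rn{Op}, \rn{Int}, \rn{If}, \rn{Abs}, \rn{Let}, and \rn{Rand$\forall$} follow by direct application of the induction hypothesis to the premises after $\alpha$-renaming. The subtyping rules \rn{ISub} and \rn{FSub} use the auxiliary semantic lemma and the mutual IH on subtyping.

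The main obstacle is the interaction between substitution and the semantic side conditions, concentrated in the \rn{Rand$\exists$} and \rn{ISub} cases. For \rn{Rand$\exists$}, after $\alpha$-renaming the bound witness variable away from $\fvs(v) \cup \set{x}$, I must transform the premise $\models \sem{\Gamma, x:\tau', \Gamma'} \Implies \exists y. \phi$ into $\models \sem{\Gamma, [v/x]\Gamma'} \Implies \exists y. [v/x]\phi$; this is exactly the application of the auxiliary lemma and is the pivotal step of the entire argument. The remaining delicacy is to carry along the well-formedness invariant that formulas never mention function-typed variables, so that the syntactic substitution $[v/x]\phi$ makes sense whenever $v$ is not an integer term.
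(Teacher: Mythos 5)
The paper gives no actual proof of this lemma---it only remarks that it can be shown ``in a similar manner to''~\cite{Unno2009}---so there is nothing concrete to diverge from. Your proposal is correct and is exactly the standard argument such a proof would take: induction on the typing derivation together with a substitution lemma for subtyping, with the only non-routine content being the semantic substitution fact for $\sem{\cdot}$ that pushes $[v/x]$ through the side conditions of \rn{ISub} and \rn{Rand$\exists$} (discharging the refinement of $\tau'$ at $x$ against the derivability of $\Gamma\vdash v:\tau'$, and using well-formedness when $v$ is function-typed), which you identify and handle correctly.
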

\begin{theorem}[Preservation]
\label{thm:preserve}
Suppose that we have $\vdash D : \Gamma$ and $\Gamma \vdash e : \tau$.
If $e$ is of the form \emph{$\LET\ x= *_{\exists}\ \IN\ e_0$}, then we
get $\Gamma \vdash e' : \tau$ for \emph{some} $e'$ such that $e \RED
e'$.  Otherwise, we get $\Gamma \vdash e' : \tau$ for \emph{any} $e'$
such that $e \RED e'$.
\end{theorem}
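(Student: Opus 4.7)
The plan is to proceed by induction on the derivation of $\Gamma \vdash e : \tau$, case-analysing on the last typing rule. Each non-value $e$ has a unique decomposition $e = E[r]$ into an evaluation context and a head redex, and I pair each operational rule from Figure~\ref{fig:semantics} with the typing rule governing $r$. The some/any dichotomy in the conclusion tracks exactly whether $r$ is of the form $\LET\ x = *_\exists\ \IN\ e_0$, since this is the only reduction whose witness is not already pinned down by the operational or typing context.

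For the value-forming rules (\rn{IVar}, \rn{FVar}, \rn{Int}, \rn{Abs}) the conclusion is vacuous, and for \rn{Sub} the IH transfers through unchanged. For the head cases each reduction is matched with a substitution-based argument: E-\textsc{Op} uses the soundness of $\sem{\OP}^{\mathrm{Ty}}$; E-\textsc{App} uses the \rn{Prog}-derived typing of the body $e_f$ and iterates Lemma~\ref{lem:subst} along $\seq{v}/\seq{x}$; E-\textsc{Let} substitutes the value into $\Gamma, x:\tau_1 \vdash e_2 : \tau_2$ via Lemma~\ref{lem:subst}, using $x \not\in \fvs(\tau_2)$; E-\textsc{If} reads off the branch whose typing premise matches the scrutinee; and the demonic reduction of $\LET\ x = *_\forall\ \IN\ e_0$ to $[n/x]e_0$, for any $n$, types $n$ at $\rt{\nu}{\nu=n} <: \TINT$ and closes via Lemma~\ref{lem:subst} together with $x \not\in \fvs(\tau)$. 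For the context-closure cases, namely \rn{App} and \rn{Let} with the head redex lying inside the reducing subexpression $e_1$, I apply the IH to $\Gamma \vdash e_1 : \tau_1$ and rebuild; the IH's some/any label matches the one the outer statement requires, because the outer head form of $e$ coincides with that of its evaluation-context subterm.

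The main obstacle is the case of $e = \LET\ x = *_\exists\ \IN\ e_0$, the sole source of the ``some $e'$'' disjunct. Here the last typing rule is \rn{Rand$\exists$}, supplying a formula $\phi$ with $\models \sem{\Gamma} \Implies \exists x.\phi$, the subderivation $\Gamma, x:\rt{x}{\phi} \vdash e_0 : \tau$, and $x \not\in \fvs(\tau)$. I need a concrete integer $n$ with $\Gamma \vdash n : \rt{x}{\phi}$, which by \rn{Int} and \rn{ISub} reduces to showing $\models \sem{\Gamma} \Implies [n/x]\phi$. The key observation is that the IH is only invoked on reducing subexpressions under the same ambient $\Gamma$, which by $\vdash D : \Gamma$ contains only closed top-level function types; hence $\sem{\Gamma} \equiv \top$ and $\fvs(\phi) \subseteq \set{x}$, so the premise collapses to $\models \exists x.\phi$. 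Decidability of QFLIA then yields a concrete witness $n$, and Lemma~\ref{lem:subst} gives $\Gamma \vdash [n/x]e_0 : [n/x]\tau = \tau$, the required $e'$ reachable by $e \RED e'$.
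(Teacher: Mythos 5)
Your proof is correct and takes essentially the same route as the paper's: induction on the typing derivation, with the \textsc{Rand}$\exists$ case closed by extracting an integer witness $n$, typing $n$ at $\rt{x}{\phi}$ via \textsc{Int} and subsumption, and concluding with Lemma~\ref{lem:subst} and $x \not\in \fvs(\tau)$. The only notable difference is that you justify the existence of the witness explicitly (via $\sem{\Gamma} \equiv \top$ and $\fvs(\phi) \subseteq \set{x}$, so that validity of $\exists x.\phi$ yields a concrete integer $n$ with $\models [n/x]\phi$), whereas the paper simply asserts that $\models \sem{\Gamma} \Implies \exists x.\phi$ gives a uniform $n$ with $\models \sem{\Gamma} \land x = n \Implies \phi$ --- an assertion that is sound only under exactly the observation you make explicit.
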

\begin{proof}
We prove the theorem by induction on the derivation of $\Gamma \vdash
e : \tau$.  We only show the case for the rule \rn{Rand$\exists$}
below.  The other cases are similar to \cite{Unno2009}.  By
\rn{Rand$\exists$}, we have $e = \LET\ x= *_{\exists}\ \IN\ e_0$,
$\fvs(\phi) \subseteq \dom(\Gamma) \cup \set{x}$, $\models
\sem{\Gamma} \Implies \exists x. \phi$, $\Gamma,x : \rt{x}{\phi}
\vdash e_0 : \tau$, and $x \not\in \fvs(\tau)$.  It then follows from
$\models \sem{\Gamma} \Implies \exists x. \phi$ that there is an
integer $n$ such that $\models \sem{\Gamma} \land x=n \Implies \phi$.
By the rule $\textsc{E-Rand}\exists$, we get $e \RED [n/x]e_0=e'$.  By
the rules \textsc{Int} and $\textsc{Sub}$, we obtain $\Gamma\vdash
n:\rt{x}{\phi}$.  Thus, we get $\Gamma \vdash e' : \tau$ by
Lemma~\ref{lem:subst}, $\Gamma,x : \rt{x}{\phi} \vdash e_0 : \tau$,
and $x \not\in \fvs(\tau)$.  \qed
\end{proof}

\subsection{Refinement Type Optimization Problems}
\label{sec:type_opt}

We now define refinement type optimization problems, which generalize
refinement type inference problems addressed by previous
work~\cite{Unno2008,Rondon2008,Unno2009,Kobayashi2011b,Jhala2011,Terauchi2010,Unno2013}.

We first introduce the notion of \emph{refinement type templates}.  A
refinement type template of a function $f$ is the refinement type
obtained from the ordinary ML-style type of $f$ by replacing each base
type $\TINT$ with an integer refinement type
$\rt{\nu}{P(\seq{x},\nu)}$ for some fresh predicate variable $P$ that
represents an unknown predicate to be inferred, and each function type
$T_1 \to T_2$ with a (dependent) function refinement type $(x:\tau_1)
\to \tau_2$.  For example, from an ML-style type $(\TINT \to \TINT)
\to \TINT \to \TINT$, we obtain the following template.
\begin{align*}
&\left(f:\ \left(x_1:\rt{x_1}{P_1(x_1)}\right) \to \rt{x_2}{P_2(x_1,x_2)}\right) \to \\
&\ (x_3:\rt{x_3}{P_3(x_3)}) \to \rt{x_4}{P_4(x_3,x_4)}
\end{align*}
Note here that the first argument $f$ is not passed as an argument to
$P_3$ and $P_4$ because $f$ is of a function type and never occurs in
QFLIA formulas for type refinement.  A refinement type template of a
program $D$ with $\dom(D)=\set{f_1,\dots,f_m}$ is the refinement type
environment $\Gamma_D=f_1:\tau_1,\dots,f_m:\tau_m$, where each
$\tau_i$ is the refinement type template of $f_i$.  We write
$\pvs(\Gamma_D)$ for the set of predicate variables that occur in
$\Gamma_D$.
A \emph{predicate substitution} $\subst$ for $\Gamma_D$ is a map from
each $P \in \pvs(\Gamma_D)$ to a closed predicate $\lambda
x_1,\dots,x_{\arity{P}}.\phi$, where $\arity{P}$ represents the arity
of $P$.  We write $\subst \Gamma_D$ to denote the application of a
substitution $\subst$ to $\Gamma_D$.  We also write $\dom(\subst)$ to
represent the domain of $\subst$.

We can define ordinary refinement type inference problems as follows.
\begin{definition}[Refinement Type Inference]
\label{def:type_inf}
A \emph{refinement type inference problem} of a program $D$ is a
problem to find a predicate substitution $\subst$ such that $\vdash D
: \subst \Gamma_D$.
\end{definition}
We now generalize refinement type inference problems to optimization
problems.
\begin{definition}[Refinement Type Optimization]
\label{def:type_opt}
Let $D$ be a program, $\prec$ be a strict partial order on predicate
substitutions, and $\Subst=\set{\subst \mid \ \vdash D : \subst
  \Gamma_D}$.  A predicate substitution $\subst \in \Subst$ is called
\emph{Pareto optimal with respect to $\prec$} if there is no $\subst'
\in \Subst$ such that $\subst' \prec \subst$.  A \emph{refinement type
  optimization problem} $(D,\prec)$ is a problem to find a Pareto
optimal substitution $\subst \in \Subst$ with respect to $\prec$.
\end{definition}
In the remainder of the paper, we will often consider type
optimization problems extended with user-specified constraints and/or
templates for some predicate variables (see Section~\ref{sec:app} for
examples and Section~\ref{sec:opt} for formal definitions).

The above definition of type optimization problems is abstract in the
sense that $\prec$ is only required to be a strict partial order on
predicate substitutions.  We below introduce an example concrete
order, which is already explained informally in
Section~\ref{sec:intro} and adopted in our prototype implementation
described in Section~\ref{sec:exp}.  The order is defined by two kinds
of optimization constraints: the optimization direction
(i.e. minimize/maximize) and the priority order on predicate variables.

\begin{definition}
Suppose that
\begin{itemize}
\item $\Pvar=\set{P_1,\dots,P_m}$ is a subset of $\pvs(\Gamma_D)$,
\item $\rho$ is a map from each predicate variable in $\Pvar$ to an
  optimization direction $d$ that is either $\MAX$ (for maximization)
  or $\MIN$ (for minimization), and
\item $\sqsubset$ is a strict total order on $\Pvar$ that expresses
  the priority.\footnote{If $\sqsubset$ were partial, the relation
    $\prec_{(\rho,\sqsubset)}$ defined shortly would not be a strict
    partial order.  Our implementation described in
    Section~\ref{sec:exp} uses topological sort to obtain a strict
    total order $\sqsubset$ from a user-specified partial one.}  We
  below assume that $P_1 \sqsubset \dots \sqsubset P_m$.
\end{itemize}
We define a strict partial order $\prec_{(\rho,\sqsubset)}$ on
predicate substitutions that respects $\rho$ and $\sqsubset$ as the
following lexicographic order:
\[
\subst_1 \prec_{(\rho,\sqsubset)} \subst_2
\Longleftrightarrow \exists i \in \set{1,\dots,m}.\ \theta_1(P_i) \prec_{\rho(P_i)} \theta_2(P_i) \land
                    \forall j < i.\ \theta_1(P_j) \equiv_{\rho(P_j)} \theta_2(P_j)
\]
Here, a strict partial order $\prec_d$ and an equivalence relation
$\equiv_d$ on predicates are defined as follows.
\begin{itemize}
\item $p_1 \prec_d p_2 \Longleftrightarrow p_1 \preceq_d p_2 \land p_2 \not\preceq_d p_1$,
\item $p_1 \equiv_d p_2 \Longleftrightarrow p_1 \preceq_d p_2 \land p_2 \preceq_d p_1$,
\item $\lambda\seq{x}.\phi_1 \preceq_{\MAX} \lambda\seq{x}.\phi_2 \Longleftrightarrow \models \phi_2 \Implies \phi_1$, and
      $\lambda\seq{x}.\phi_1 \preceq_{\MIN} \lambda\seq{x}.\phi_2 \Longleftrightarrow \models \phi_1 \Implies \phi_2$.
\end{itemize}
\end{definition}

\begin{example}
\label{ex:sum}
Recall the function $\SUM$ and its type template with the predicate
variables $P,Q$ in Section~\ref{sec:intro}.  Let us consider
optimization constraints $\rho(P) = \MAX$, $\rho(Q) = \MIN$, and
$P\sqsubset Q$, and predicate substitutions
\begin{itemize}
\item $\subst_1 = \set{P \mapsto \lambda x.\, x=0,\, Q \mapsto \lambda
    x,y.\, y=x}$,
\item $\subst_2 = \set{P \mapsto \lambda x.\, \top,\, Q \mapsto
  \lambda x,y.\, y \geq 0}$, and
\item $\subst_3 = \{P \mapsto \lambda x.\, x<0,\, Q \mapsto \lambda
  x,y.\, \bot\}$.
\end{itemize}
We then have $\subst_2 \prec_{(\rho,\sqsubset)} \subst_1$ and
$\subst_2 \prec_{(\rho,\sqsubset)} \subst_3$, because $(\lambda
x.\,\top) \prec_{\MAX} (\lambda x.\,x=0)$ and $(\lambda x.\,\top)
\prec_{\MAX} (\lambda x.\,x<0)$.
\qedex
\end{example}

\section{Applications of Refinement Type Optimization}
\label{sec:app}

In this section, we present applications of refinement type
optimization to the problems of proving safety (in
Section~\ref{sec:safe}) and termination (in Section~\ref{sec:term}),
and disproving safety (in Section~\ref{sec:non-safe}) and termination
(in Section~\ref{sec:non-term}) of programs in the language $L$.  In
particular, we discuss precondition inference, namely, inference of
most-general characterization of inputs for which a given program
satisfies (or violates) a given safety (or termination) property.

\subsection{Proving Safety}
\label{sec:safe}

We explain how to formalize, as a type optimization problem, a problem
of inferring maximally-weak precondition under which a given program
satisfies a given postcondition.  For example, let us consider the
following terminating version of $\SUM$.
\begin{alltt}
  let rec sum' x = if x <= 0 then 0 else x + sum' (x-1)
\end{alltt}
In our framework, a problem to infer a maximally-weak precondition on
the argument $x$ for a postcondition $x = \SUMP\ x$ is expressed as a
type optimization problem to infer $\SUMP$'s refinement type of the
form $(x: \rt{x}{P(x)}) \to \rt{y}{x = y}$ under an optimization
constraint $\maxi(P)$.
Our type optimization method described in Sections~\ref{sec:red} and
\ref{sec:alg} infers the following type.
\[
(x: \rt{x}{0 \leq x \leq 1}) \to \rt{y}{x = y}
\]
This type says that the postcondition holds if the actual argument $x$
is 0 or 1.

\begin{example}[Hihger-Order Function]
For an example of a higher-order function, consider the following.
\begin{alltt}
  let rec repeat f n e = if n<=0 then e else repeat f (n-1) (f e)
\end{alltt}
By inferring \texttt{repeat}'s refinement type of the form
\[
(f:(x: \rt{x}{P_1(x)}) \to \rt{y}{P_2(x, y)}) \to (n:\TINT) \to (e:\rt{e}{P_3(n, e)}) \to \rt{r}{r \geq 0}
\]
under optimization constraints $\rho(P_1)=\MIN$,
$\rho(P_2)=\rho(P_3)=\MAX$, and $P_3 \sqsubset P_2 \sqsubset P_1$, our
type optimization method obtains
\[
(f:(x: \rt{x}{x \geq 0}) \to \rt{y}{y \geq 0}) \to (n:\TINT) \to (e:\rt{e}{e \geq 0}) \to \rt{r}{r \geq 0}
\]
Thus, type optimization can be applied to infer maximally-weak
refinement types of (possibly higher-order) arguments that are
sufficient for the function to satisfy a given postcondition.  \qedex
\end{example}

\subsection{Disproving Termination}
\label{sec:non-term}

In a similar manner to Section~\ref{sec:safe}, we can apply type
optimization to the problems of inferring maximally-weak precondition
for a given program to violate the termination property.  For example,
consider the function $\SUM$ in Section~\ref{sec:intro}.  For
disproving termination of $\SUM$, we infer $\SUM$'s refinement type of
the form $(x:\rt{x}{P(x)}) \to \rt{y}{\bot}$ under an optimization
constraint $\maxi(P)$.
Our type optimization method infers the following type.
\[
(x: \rt{x}{x < 0}) \to \rt{y}{\bot}
\]
The type expresses the fact that no value is returned by $\SUM$ (i.e.,
$\SUM$ is non-terminating) if the actual argument $x$ satisfies $x <
0$.

\begin{example}[Non-Deterministic Function]
For an example of non-deterministic function, let us consider a
problem of disproving termination of the following.
\begin{alltt}
  let rec f x = let n = read_int () in if n<0 then x else f x
\end{alltt}
Here, $\verb|read_int ()|$ is a function to get an integer value from
the user and is modeled as $*_{\exists}$ in our language $L$.  Note
that the termination of \texttt{f} does not depend on the argument
\texttt{x} but user inputs \texttt{n}.  Actually, our type
optimization method successfully disproves termination of \texttt{f}
by inferring a refinement type $(x:\TINT) \to \rt{y}{\bot}$ for
\texttt{f} and $\rt{n}{n\geq0}$ for the user inputs \texttt{n}.  This
means that \texttt{f} is non-terminating if the user always inputs
some non-negative integer.  \qedex
\end{example}

\subsection{Proving Termination}
\label{sec:term}

Refinement type optimization can also be applied to bounds analysis
for inferring upper bounds of the number of recursive calls.
Our bounds analysis for functional programs is inspired by a program
transformation approach to bounds analysis for imperative
programs~\cite{Gulwani2008a,Gulwani2009a}.  Let us consider $\SUM$ in
Section~\ref{sec:intro}.  By inserting additional parameters $i$ and
$c$ to the definition of $\SUM$, we obtain
\begin{alltt}
  let rec sum_t x i c = if x=0 then 0 else x + sum_t (x-1) i (c+1)
\end{alltt}
Here, $i$ and $c$ respectively represent the initial value of the
argument $x$ and the number of recursive calls so far.  For proving
termination of $\SUM$, we infer $\SUMT$'s refinement type of the form
\[
(x:\rt{x}{P(x}) \to (i:\TINT) \to (c:\rt{c}{\mathit{Inv}(x,i,c)}) \to \TINT
\]
under optimization constraints $\maxi(P)$, $\mini(\mathit{Bnd})$, $P
\sqsubset \mathit{Bnd}$, and additional constraints on the predicate
variables $P,\mathit{Bnd},\mathit{Inv}$
\begin{align}
\forall x,i,c.\ (\mathit{Inv}(x,i,c)\Leftarrow&\ c = 0 \land i = x) \\
\forall x,i,c.\ (\mathit{Bnd}(i,c) \Leftarrow&\ P(x) \land Inv(x,i,c))
\end{align}
Here, $\mathit{Bnd}(i,c)$ is intended to represent the bounds of the
number $c$ of recursive calls of $\SUM$ with respect to the initial
value $i$ of the argument $x$.  We therefore assume that
$\mathit{Bnd}(i,c)$ is of the form $0 \leq c \leq k_0 + k_1 \cdot i$,
where $k_0,k_1$ represent unknown coefficients to be inferred.
The constraint (4) is necessary to express the meaning of the inserted
parameters $i$ and $c$.  The constraint (5) is also necessary to
ensure that the bounds $\mathit{Bnd}(i,c)$ is implied by a
precondition $P(x)$ and an invariant $Inv(x,i,c)$ of $\SUM$.
Our type optimization method then infers
\[
(x:\rt{x}{x\geq 0}) \to (i:\TINT) \to (c:\rt{c}{x \leq i \land i=x+c}) \to \TINT
\]
and $\mathit{Bnd}(i,c) \equiv 0 \leq c \leq i$.  Thus, we can conclude
that $\SUM$ is terminating for any input $x\geq 0$ because the number
$c$ of recursive calls is bounded from above by the initial value $i$
of the argument $x$.

Interestingly, we can infer a precondition for minimizing the number
of recursive calls of $\SUM$ by replacing the priority constraint $P
\sqsubset \mathit{Bnd}$ with $\mathit{Bnd} \sqsubset P$ and adding an
additional constraint $\exists x.P(x)$ (to avoid a meaningless
solution $P(x) \equiv \bot$).  In fact, our type optimization method
obtains
\[
(x:\rt{x}{x=0}) \to (i:\TINT) \to (c:\rt{c}{c=0}) \to \TINT
\]
and $\mathit{Bnd}(i,c) \equiv c=0$.  Therefore, we can conclude that
the minimum number of recursive calls is $0$ when the actual argument
$x$ is $0$.

We expect that our bounds analysis for functional programs can further
be extended to infer non-linear upper bounds by adopting ideas from an
elaborate transformation for bounds analysis of imperative
programs~\cite{Gulwani2009a}.

\subsection{Disproving Safety}
\label{sec:non-safe}

We can use the same technique in Section~\ref{sec:term} to infer
maximally-weak precondition for a given program to violate a given
postcondition.  For example, let us consider again the function
$\SUM$.  A problem to infer a maximally-weak precondition on the
argument $x$ for violating a postcondition $\SUM\ x \geq 2$ can be
reduced to a problem to infer $\SUMT$'s refinement type of the form
\begin{align*}
(x:\rt{x}{P(x)}) \to (i:\TINT) \to (c:\rt{c}{\mathit{Inv}(x,i,c)}) \to \rt{y}{\neg(y \geq 2)}
\end{align*}
under the same constraints for bounds analysis in
Section~\ref{sec:term}.
The refinement type optimization method then obtains
\begin{align*}
(x:\rt{x}{0\leq x\leq 1}) \to (i:\TINT) \to (c:\rt{c}{0\leq x \land i=x+c}) \to \rt{y}{\neg(y \geq 2)}
\end{align*}
and $\mathit{Bnd}(i,c) \equiv 0 \leq c \leq i$.  This result says that
if the actual argument $x$ is 0 or 1, then $\SUM$ terminates and
returns some integer $y$ that violates $y \geq 2$.  In other words,
$x=0,1$ are counterexamples to the postcondition $\SUM\ x \geq 2$.

We can instead find a minimal-length counterexample
path\footnote{Here, minimality is with respect to the number of
  recursive calls within the path.} to the postcondition $\SUM\ x \geq
2$ by just replacing the priority constraint $P \sqsubset
\mathit{Bnd}$ with $\mathit{Bnd} \sqsubset P$ and adding an additional
constraint $\exists x.P(x)$.  Our type optimization method then infers
\[
(x:\rt{x}{x=0}) \to (i:\TINT) \to (c:\rt{c}{0\leq x \land i=x+c}) \to \rt{y}{\neg(y \geq 2)}
\]
and $\mathit{Bnd}(i,c) \equiv c=0$.  From the result, we can conclude
that a minimal-length counterexample path is obtained when the actual
argument $x$ is $0$.

\section{Horn Constraint Optimization and Reduction from Refinement Type Optimization}
\label{sec:opt}

We reduce refinement type optimization problems into constraint
optimization problems subject to existentially-quantified Horn
clauses~\cite{Beyene2013,Unno2013,Kuwahara2015}.  We first formalize
Horn constraint optimization problems (in Section~\ref{sec:horn}) and
then explain the reduction (in Section~\ref{sec:red}).

\subsection{Horn Constraint Optimization Problems}
\label{sec:horn}

\emph{Existentially-Quantified Horn Clause Constraint Sets}
($\exists$HCCSs) over QFLIA are defined as follows.
\begin{align*}
\begin{array}{rrl}
(\text{$\exists$HCCSs}) & \hccs &::= \{\hc_1,\ldots,\hc_m\}\\
(\text{Horn clauses}) & \hc &::= h\Leftarrow b\\
(\text{heads}) & h &::= P(\seq{t}) \mid \phi \mid \exists \seq{x}.(P(\seq{t}) \land \phi) \\
(\text{bodies}) & b &::= P_{1}(\seq{t}_1)\land\ldots\land
 P_{m}(\seq{t}_m)\land \phi
\end{array}
\end{align*}
We write $\pvs(\hccs)$ for the set of predicate variables that occur
in $\hccs$.

A \emph{predicate substitution} $\subst$ for an $\exists$HCCS $\hccs$
is a map from each $P \in \pvs(\hccs)$ to a closed predicate $\lambda
x_1,\dots,x_{\arity{P}}.\phi$.  We write $\Theta_{\hccs}$ for the set
of predicate substitutions for $\hccs$.
We call a substitution $\subst$ is a \emph{solution} of $\hccs$ if
for each $\hc \in \hccs$, $\models \subst \hc$.
For a subset $\Theta \subseteq \Theta_{\hccs}$, we call a substitution
$\theta \in \Theta$ is a \emph{$\Theta$-restricted solution} if
$\theta$ is a solution of $\hccs$.  Our constraint optimization method
described in Section~\ref{sec:alg} is designed to find a
$\Theta$-restricted solution for some $\Theta$ consisting of
substitutions that map each predicate variable to a predicate with a
bounded number of conjunctions and disjunctions.  In particular, we
often use
\[
\SubstAtom=\set{P \mapsto \lambda x_1,\dots,x_{\arity{P}}. n_0 +
  \Sigma_{i=1}^{\arity{P}} n_i \cdot x_i \geq 0 \mid P \in
  \pvs(\hccs)}
\]
consisting of atomic predicate substitutions.

\begin{example}
\label{ex:sum_hco}
Recall the function $\SUM$ and the predicate substitutions $\subst_1,
\subst_2, \subst_3$ in Example~\ref{ex:sum}.
Our method reduces a type optimization problem for $\SUM$ into a
constraint optimization problem for the following HCCS $\hccs_{\SUM}$
(the explanation of the reduction is deferred to
Section~\ref{sec:red}).
\[
\set{
\begin{array}{l}
Q(x,0) \Leftarrow P(x) \land x = 0,\ P(x-1) \Leftarrow P(x) \land x \neq 0, \\
Q(x,x+y) \Leftarrow P(x) \land Q(x-1,y) \land x \neq 0
\end{array}}
\]

Here, $\subst_1$ is a solution of $\hccs_{\SUM}$, and $\subst_2$ and
$\subst_3$ are $\SubstAtom$-restricted solutions of $\hccs_{\SUM}$. If
we fix $Q(x,y) \equiv \bot$ (i.e., infer $\SUM$'s type of the form
$(x:\rt{x}{P(x)}) \to \rt{y}{\bot}$) for disproving termination of
$\SUM$ as in Section~\ref{sec:non-term}, we obtain the following HCCS
$\hccs_{\SUM}^{\bot}$.
\[
\set{\bot \Leftarrow P(x) \land x = 0,\ P(x-1)\Leftarrow P(x)\land x\neq 0}
\]
$\hccs_{\SUM}^{\bot}$ has, for example, $\SubstAtom$-restricted
solutions $\set{P\mapsto \lambda x. x<0}$ and $\{P\mapsto \lambda x. x
< -100\}$.
\qedex
\end{example}

We now define Horn constraint optimization problems for
$\exists$HCCSs.
\begin{definition}
\label{def:horn_opt}
Let $\hccs$ be an $\exists$HCCS and $\prec$ be a strict partial order
on predicate substitutions.  A solution $\subst$ of $\hccs$ is called
\emph{Pareto optimal with respect to $\prec$} if there is no solution
$\subst'$ of $\hccs$ such that $\subst' \prec \subst$.  A \emph{Horn
  constraint optimization problem} $(\hccs,\prec)$ is a problem to
find a Pareto optimal solution $\subst$ with respect to $\prec$.  A
\emph{$\Theta$-restricted Horn constraint optimization problem} is a
Horn constraint optimization problem with the notion of solutions
replaced by $\Theta$-restricted solutions.
\end{definition}

\begin{example}
Recall $\hccs_{\SUM}$ and its solutions
$\theta_1$,$\theta_2$,$\theta_3$ in Example~\ref{ex:sum}.  Let us
consider a Horn constraint optimization problem $(\hccs_{\SUM},
\prec_{(\rho,\sqsubset)})$ where $\rho(P)=\ \MAX$, $\rho(Q)=\ \MIN$,
and $Q \sqsubset P$.  We have $\theta_3 \prec_{(\rho,\sqsubset)}
\theta_1$ and $\theta_3 \prec_{(\rho,\sqsubset)} \theta_2$.  In fact,
$\theta_3$ is a Pareto optimal solution of $\hccs_{\SUM}$ with respect
to $\prec_{(\rho,\sqsubset)}$.  \qedex
\end{example}

In general, an $\exists$HCCS $\hccs$ may not have a Pareto optimal
solution with respect to $\prec_{(\rho,\sqsubset)}$ even though
$\hccs$ has a solution.  For example, consider a Horn constraint
optimization problem $(\hccs_{\SUM},\prec_{(\rho,\sqsubset)})$ where
$\rho(P) = \MAX$, $\rho(Q) = \MIN$, and $P\sqsubset Q$.  Because the
semantically optimal solution $Q(x,y) \equiv y = \frac{x(x+1)}{2}$ is
not expressible in QFLIA, it must be approximated, for example, as
$Q(x,y) \equiv y \geq 0 \land y \geq x \land y \geq 2x-1$.  The
approximated solution, however, is not Pareto optimal because we can
always get a better approximation like $Q(x,y) \equiv y \geq 0 \land y
\geq x \land y \geq 2x-1 \land y \geq 3x-3$ if we use more
conjunctions.

We can, however, show that an $\exists$HCCS has a
$\SubstAtom$-restricted Pareto optimal solution with respect to
$\prec_{(\rho,\sqsubset)}$ if it has a $\SubstAtom$-restricted
solution.  For the above example, $\subst_2$ in Example~\ref{ex:sum}
is a $\SubstAtom$-restricted Pareto optimal solution.
\begin{lemma}
\label{lem:exist}
Suppose that an $\exists$HCCS $\hccs$ has a $\SubstAtom$-restricted
solution and for any $P$ such that $\rho(P)=\MIN$, $P$ is not
existentially quantified in $\hccs$.  It then follows that $\hccs$ has
a $\SubstAtom$-restricted Pareto optimal solution with respect to
$\prec_{(\rho,\sqsubset)}$.
\end{lemma}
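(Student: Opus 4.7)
The plan is to prove the lemma by induction on $m = |\Pvar|$, processing predicate variables in the priority order $P_1 \sqsubset \cdots \sqsubset P_m$. The base case $m = 0$ is trivial: any $\SubstAtom$-restricted solution (which exists by hypothesis) is vacuously Pareto optimal. For the inductive step, let $\Theta^\star$ denote the nonempty set of $\SubstAtom$-restricted solutions of $\hccs$. The key step is to find some $\hat\theta \in \Theta^\star$ whose value $\hat\theta(P_1)$ is $\prec_{\rho(P_1)}$-minimal (modulo $\equiv_{\rho(P_1)}$) in $\{\theta(P_1) \mid \theta \in \Theta^\star\}$; once this is done, I would form $\hccs'$ by substituting the chosen atomic predicate for $P_1$ throughout $\hccs$, observe that the lemma's hypotheses are inherited (substitution introduces no new existentials, and the restriction of $\hat\theta$ to $\pvs(\hccs')$ witnesses $\SubstAtom$-restricted solvability), and apply the inductive hypothesis to $\hccs'$ under the shortened priority sequence $P_2 \sqsubset \cdots \sqsubset P_m$ to obtain a Pareto optimal $\theta''$. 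The combined substitution $\{P_1 \mapsto \hat\theta(P_1)\} \cup \theta''$ is then Pareto optimal for $\hccs$ under $\prec_{(\rho,\sqsubset)}$: any strictly smaller $\theta'$ would either contradict the minimality of $\hat\theta(P_1)$ (if $\theta'(P_1) \prec_{\rho(P_1)} \hat\theta(P_1)$) or, by the lex structure, contradict the inductive Pareto-optimality of $\theta''$ in $\hccs'$.

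The main obstacle is the existence of $\hat\theta(P_1)$. In general, both $\preceq_{\MAX}$ and $\preceq_{\MIN}$ on atomic predicates support infinite strictly descending chains (for instance, $x \geq 0, x \geq -1, x \geq -2, \ldots$ under $\MAX$ and $x \geq 0, x \geq 1, x \geq 2, \ldots$ under $\MIN$), so minimality is not automatic. The non-existential hypothesis on $\MIN$-predicates is essential here: without it, $\hccs = \{\exists x.\, (P(x) \land x \geq 0) \Leftarrow \top\}$ with $\rho(P) = \MIN$ admits atomic solutions $P \equiv x \geq n$ for every $n \geq 0$ and has no Pareto optimum.

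My plan for establishing the existence of $\hat\theta(P_1)$ is to parameterize each atomic substitution by its integer coefficient tuple in $\mathbb{Z}^{k_1 + 1}$ (where $k_1 = \arity{P_1}$) and to analyze the projection of $\Theta^\star$ onto $P_1$'s coefficient space. The hypothesis forces $P_1$ to occur only in implicational (non-existentially-quantified) positions whenever $\rho(P_1) = \MIN$, and, after fixing the other predicates' atomic forms, each Horn clause of $\hccs$ reduces to a conjunction of linear constraints on $P_1$'s coefficients. This characterizes the projection as a rational polyhedral cone intersected with the integer lattice, within which a $\prec_{\rho(P_1)}$-minimal atomic predicate can be extracted by a Farkas-style cone-generation argument followed by an integer-realization step. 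The symmetric analysis handles the $\MAX$ case, where weakening $P_1$ only eases any existential head while tightening implicational obligations in a controlled way. This polyhedral analysis is the crux of the technical work.
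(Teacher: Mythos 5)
Your lexicographic scaffolding is sound: optimizing $P_1$ first, substituting the optimal atomic predicate, and recursing on $P_2 \sqsubset \cdots \sqsubset P_m$ does yield a Pareto optimal solution once the single-variable step is secured (logically equivalent choices of $P_1$ induce the same residual constraint, so the lex argument at the end goes through), and your counterexample showing why $\MIN$-predicates must not be existentially quantified is correct and a genuine addition. However, the crux --- existence of a $\prec_{\rho(P_1)}$-minimal value of $P_1$ over the solution set --- is exactly where the proposal fails, for two reasons. First, the projection $\set{\theta(P_1) \mid \theta \in \Theta^\star}$ is not ``a rational polyhedral cone intersected with the integer lattice.'' The condition that a coefficient vector $(n_0,\dots,n_{\arity{P_1}})$ participates in some solution is a $\forall\exists$ condition in which the $n_i$ are multiplied by universally quantified variables; after Farkas it becomes bilinear and, over the integers, Farkas is not even complete. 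Worse, you must range over \emph{all} atomic choices for the other predicate variables, not ``fix'' them, so the projection is an infinite union of such feasible sets and has no convexity or cone structure to exploit. Second, even if the projection were a cone, a cone of coefficients can contain the chain $x \geq 0,\ x \geq -1,\ x \geq -2,\dots$ without containing a minimal element under implication, so ``extract a minimum by cone generation'' begs the question: the entire difficulty is showing that when every element of such a descending chain is achievable, the limit ($\lambda\seq{x}.\top$ for $\MAX$, $\lambda\seq{x}.\bot$ for $\MIN$) is also achievable as part of a solution that agrees (up to $\equiv_{\rho(Q)}$) with the chain on all higher-priority $Q$.

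That achievability-of-the-limit step is precisely what the paper's argument supplies and your proposal omits. The paper argues by contradiction from an infinite descending chain of $\SubstAtom$-restricted solutions that strictly descends at a single $P$ while stabilizing above $P$: because strict implication between nonempty, non-full integer half-spaces forces them to be parallel, and because integrality bounds how slowly the (normalized) thresholds can drift, the chain's union (for $\MAX$) must be all of $\mathbb{Z}^{\arity{P}}$; a separate ``continuity'' property of Horn clause sets then promotes the limit predicate to an actual solution, producing a lower bound of the chain and a contradiction. If you want to keep your inductive decomposition, you can --- but you must replace the polyhedral step with this parallel-half-spaces-plus-discreteness-plus-continuity argument (or an equivalent one) to justify that a minimal achievable value of $P_1$ exists; as written, the proposal asserts the hard part rather than proving it.
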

\begin{proofsketch}
We prove the lemma by contradiction.  Suppose that $\hccs$ has a
$\SubstAtom$-restricted solution but no Pareto optimal one.  It then
follows that there exist an infinite descending chain $\subst_1
\succ_{(\rho,\sqsubset)} \subst_2 \succ_{(\rho,\sqsubset)} \dots$ of
$\SubstAtom$-restricted solutions and a predicate variable $P$ such
that
\begin{itemize}
\item $\forall i \geq 1.\ \subst_i(P) \succ_{\rho(P)} \subst_{i+1}(P)
  \land \forall Q \sqsubset P.\ \subst_i(Q) \equiv_{\rho(Q)}
  \subst_{i+1}(Q)$ and
\item no $\SubstAtom$-restricted solution is a lower bound of the
  chain.
\end{itemize}
The key observations here are that the half-spaces represented by
$\subst_1(P),\subst_2(P),\dots$ are parallel, and for some $k>0$ that
depends on $2^{\arity{P}}$ and the largest absolute value of
coefficients in $\subst_1(P)$, the distance $d_i$ between
$\subst_i(P)$ and $\subst_{i+k}(P)$ are $d_i > 1$ for all $i\geq 1$,
because of the strictness of $\succ_{\rho(P)}$ and the discreteness of
integers.  By continuity of $\hccs$, $\hccs$ has a
$\SubstAtom$-restricted solution $\subst$ such that $\subst(P) \equiv
\lambda\seq{x}.\top$ if $\rho(P)=\MAX$ and $\subst(P) \equiv
\lambda\seq{x}.\bot$ if $\rho(P)=\MIN$, and $\forall Q \sqsubset
P.\ \subst(Q) \equiv_{\rho(Q)} \subst_1(Q)$.  $\subst$ is obviously a
lower bound of the chain.  Thus, a contradiction is obtained.  \qed
\end{proofsketch}

\subsection{Reduction from Refinement Type Optimization}
\label{sec:red}

Our method reduces a refinement type optimization problem into an Horn
constraint optimization problem in a similar manner to the previous
refinement type inference method~\cite{Unno2009}.  Given a program
$D$, our method first prepares a refinement type template $\Gamma_D$
of $D$ as well as, for each expression of the form $\LET\ x=
*_{\exists}\ \IN\ e$, a refinement type template
$\rt{x}{P(\seq{y},x)}$ of $x$, where $P$ is a fresh predicate variable
and $\seq{y}$ is the sequence of all integer variables in the scope.
Our method then generates an $\exists$HCCS by type-checking $D$
against $\Gamma_D$ and collecting the proof obligations of the forms
$\sem{\Gamma} \land \phi_1 \Implies \phi_2$ and $\sem{\Gamma} \Implies
\exists \nu.\phi$ respectively from each application of the rules
\rn{ISub} and \rn{Rand$\exists$}.  We write $\GEN(D,\Gamma_D)$ to
denote the $\exists$HCCS thus generated from $D$ and $\Gamma_D$.

We can show the soundness of our reduction in the same way as in
\cite{Unno2009}.
\begin{theorem}[Soundness of Reduction]
\label{thm:red}
Let $(D,\prec)$ be a refinement type optimization problem and
$\Gamma_D$ be a refinement type template of $D$.  If $\subst$ is a
Pareto optimal solution of $\GEN(D,\Gamma_D)$, then $\subst$ is a
solution of $(D,\prec)$.
\end{theorem}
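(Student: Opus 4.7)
The plan is to reduce the theorem to the key equivalence
\[
\theta \text{ is a solution of } \GEN(D,\Gamma_D) \iff\ \vdash D : \theta\Gamma_D,
\]
from which the statement follows immediately: the set of solutions of $\GEN(D,\Gamma_D)$ coincides set-theoretically with $\Subst = \{\theta \mid\ \vdash D : \theta\Gamma_D\}$, and Pareto optimality of $\theta$ with respect to $\prec$ is a property that depends only on this underlying set, so it transfers verbatim from the Horn constraint side to the type optimization side. Hence the whole task is really to establish the equivalence above; the ``soundness of the reduction'' is then just a projection of this bijection to the Pareto ordering.

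To prove the equivalence, I would proceed by induction on the structure of the expressions in $D$, following the pattern of \cite{Unno2009}. For the $(\Leftarrow)$ direction (if $D$ type-checks against $\theta\Gamma_D$ then $\theta$ solves the constraints), I would observe that $\GEN$ collects exactly those proof obligations emitted by the rules \rn{ISub} and \rn{Rand$\exists$} during a syntax-directed type derivation that follows the shape of $\Gamma_D$; any derivation of $\vdash D:\theta\Gamma_D$ can be normalised to such a syntax-directed form where the \rn{Sub} and \rn{FSub} rules are applied only at the interfaces dictated by the template, so every obligation arising along the way must hold, which is precisely the assertion that $\theta$ satisfies the generated Horn clauses. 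For the $(\Rightarrow)$ direction, I would show that if $\theta$ solves each clause in $\GEN(D,\Gamma_D)$ then the syntax-directed derivation can be assembled rule by rule: the integer/function/application/$\IFZ$/$\LET$/$\rn{Rand}\forall$ rules go through without side conditions, while the \rn{ISub} obligations discharge exactly the subtyping premises (via the semantic validity $\models \sem{\Gamma}\land\phi_1\Implies\phi_2$) and the clauses of the form $\exists x.(P(\seq{y},x)\land\phi)\Leftarrow b$ discharge the existential premise of \rn{Rand$\exists$}.

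Having established the equivalence, the Pareto transfer step is a one-liner: assume $\theta$ is a Pareto optimal solution of $\GEN(D,\Gamma_D)$ but not of $(D,\prec)$. Then there would exist $\theta'\in\Subst$ with $\theta'\prec\theta$; by the $(\Leftarrow)$ direction $\theta'$ is also a solution of $\GEN(D,\Gamma_D)$, contradicting Pareto optimality on the Horn side. Hence $\theta$ is Pareto optimal with respect to $\prec$ over $\Subst$, i.e.\ a solution of $(D,\prec)$ in the sense of Definition~\ref{def:type_opt}.

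The hard part will be the $(\Rightarrow)$ direction of the equivalence, specifically the \rn{Rand$\exists$} case, because $\GEN$ has to generate a single clause with an existentially quantified head whose solution under $\theta$ must supply the witness predicate $\phi$ used by the typing rule. One must pin down how the template $\rt{x}{P(\seq{y},x)}$ associated with each occurrence of $*_{\exists}$ lets us read off $\phi$ from $\theta(P)$ and verify $\models \sem{\theta\Gamma}\Implies \exists x.\theta(P)(\seq{y},x)$ from the solvedness of the corresponding clause; once that bookkeeping is set up, the other cases are standard and follow \cite{Unno2009} almost verbatim. A secondary (and purely notational) obstacle is being precise about which intermediate template types are introduced at application and let-binding sites, so that the induction hypothesis can be applied to the subexpressions against the appropriate components of $\theta\Gamma_D$.
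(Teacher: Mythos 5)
Your proposal is correct and follows essentially the route the paper intends: the paper offers no proof of its own, deferring to the soundness-and-completeness of syntax-directed constraint generation in \cite{Unno2009}, which is precisely the equivalence you isolate, and the Pareto-transfer step is then the trivial remainder. One caveat worth making explicit: since $\GEN$ introduces a fresh predicate variable for each $*_{\exists}$ binding, the solution set of $\GEN(D,\Gamma_D)$ and $\Subst$ live over different domains and are related by projection/extension rather than literal set-theoretic coincidence, so your $(\Leftarrow)$ direction must produce an \emph{extension} $\hat{\subst}'$ of $\subst'$ solving the $\exists$HCCS (with the witness formulas of the \rn{Rand$\exists$} premises supplying the values of the fresh variables) and then observe that $\prec_{(\rho,\sqsubset)}$ only inspects predicate variables of $\pvs(\Gamma_D)$, so $\hat{\subst}' \prec \subst$ still contradicts Pareto optimality on the Horn side --- a bookkeeping point you essentially anticipate in your closing paragraph.
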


\section{Horn Constraint Optimization Method}
\label{sec:alg}

\begin{figure}[t]
 \centering
 \begin{minipage}[c]{0.76\textwidth}
\begin{algorithmic}[1]
\Procedure{Optimize}{$\hccs,\prec$}
 \State \Match{\textsc{Solve}$(\hccs)$}\label{pse:solve}
 \State $\ \;\;$ \textit{Unknown} $\to$ \ret \textit{Unknown}\label{pse:unknown}
 \State $\ \mid\;$ \textit{NoSol} $\to$ \ret \textit{NoSol}\label{pse:nosol}
 \State $\ \mid\;$ \textit{Sol}$(\subst_0) \to$\label{pse:sol} \\
 \ind\ind $\subst := \subst_0$;\label{pse:improve_begin} \\
 \ind\ind \textbf{while} $\TRUE$ \textbf{do} \label{pse:while_begin}\\
 \ind\ind\ind
 \Let{$\hccs'$}{$\textsc{Improve}_{\prec}(\subst,\hccs)$}\label{pse:generate} \\
 \ind\ind\ind \Match{\textsc{Solve}$(\hccs')$}
 \State\ind\ind $\ \;\;$ \textit{Unknown} $\to$ \ret \textit{Sol}$(\subst)$\label{pse:improve_unknown}
 \State\ind\ind $\ \mid\;$ \textit{NoSol} $\to$ \ret \textit{OptSol}$(\subst)$\label{pse:improve_nosol}
 \State\ind\ind $\ \mid\;$ \textit{Sol}$(\subst') \to \subst := \subst'$\label{pse:improve_sol} \\
 \ind\ind \textbf{end}\label{pse:improve_end}
 \EndProcedure
\end{algorithmic}
 \end{minipage}
 \caption{Pseudo-code of the constraint optimization method for $\exists$HCCSs}
 \label{fig:code_optimize}
\end{figure}

In this section, we describe our Horn constraint optimization method
for $\exists$HCCSs.  The method repeatedly improves a current solution
until convergence.  The pseudo-code of the method is shown in
Figure~\ref{fig:code_optimize}.  The procedure \textsc{Optimize} for
Horn constraint optimization takes a ($\Theta$-restricted)
$\exists$HCCS optimization problem $(\hccs,\prec)$ and returns any of
the following: \textit{Unknown} (which means the existence of a
solution is unknown), \textit{NoSol} (which means no solution exists),
\textit{Sol}$(\theta)$ (which means $\theta$ is a possibly non-Pareto
optimal solution), or \textit{OptSol}$(\theta)$ (which means $\theta$
is a Pareto optimal solution).  The sub-procedure \textsc{Solve} for
Horn constraint solving takes an $\exists$HCCS $\hccs$ and returns any
of \textit{Unknown}, \textit{NoSol}, or \textit{Sol}$(\theta)$.  The
detailed description of \textsc{Solve} is deferred to
Section~\ref{sub:solve}.

\textsc{Optimize} first calls \textsc{Solve} to find an initial
solution $\theta_0$ of $\hccs$ (line \ref{pse:solve}).
\textsc{Optimize} returns $\textit{Unknown}$ if \textsc{Solve} returns
$\textit{Unknown}$ (line \ref{pse:unknown}) and $\textit{NoSol}$ if
\textsc{Solve} returns $\textit{NoSol}$ (line \ref{pse:nosol}).
Otherwise (line \ref{pse:sol}), \textsc{Optimize} repeatedly improves
a current solution $\theta$ starting from $\theta_0$ until convergence
(lines \ref{pse:improve_begin} -- \ref{pse:improve_end}).
To improve $\theta$, we call a sub-procedure
$\textsc{Improve}_{\prec}(\theta,\hccs)$ for generating an
$\exists$HCCS $\hccs'$ from $\hccs$ by adding constraints that require
any solution $\theta'$ of $\hccs'$ satisfies $\theta' \prec \theta$
(line \ref{pse:generate}).  \textsc{Optimize} then calls
\textsc{Solve} to find a solution of $\hccs'$.  If \textsc{Solve}
returns $\textit{Unknown}$, \textsc{Optimize} returns
\textit{Sol}$(\theta)$ as a (possibly non-Pareto optimal) solution
(line \ref{pse:improve_unknown}).  If \textsc{Solve} returns
$\textit{NoSol}$, it is the case that no improvement is possible, and
hence the current solution $\theta$ is Pareto optimal.  Thus,
\textsc{Optimize} returns \textit{OptSol}$(\theta)$ (line
\ref{pse:improve_nosol}).  Otherwise, we obtain an improved solution
$\theta' \prec \theta$, and \textsc{Optimize} updates the current
solution $\theta$ with $\theta'$ and repeats the improvement process
(line \ref{pse:improve_sol}).

\begin{example}
\label{ex:optimize}
Recall $\hccs_{\SUM}^{\bot}$ in Example~\ref{ex:sum_hco} and consider
an optimization problem
$(\hccs_{\SUM}^{\bot},\prec_{(\sqsubset,\rho)})$ where $\rho(P)=\MAX$.
We below explain how
$\textsc{Optimize}(\hccs_{\SUM}^{\bot},\prec_{(\sqsubset,\rho)})$
proceeds.  First, $\textsc{Optimize}$ calls the sub-procedure
$\textsc{Solve}$ to find an initial solution of $\hccs_{\SUM}^{\bot}$
(e.g., $\theta_0 = \set{P\mapsto \lambda x.\bot}$).
$\textsc{Optimize}$ then calls the sub-procedure
$\textsc{Improve}_{\prec}(\theta_0,\hccs_{\SUM}^{\bot})$ and obtains
an $\exists$HCCS $\hccs'=\hccs_{\SUM}^{\bot} \cup \{P(x)\Leftarrow
\bot,\; \exists x.P(x) \land \neg \bot\}$.  Note that $\hccs'$
requires that for any solution $\theta$ of $\hccs'$, $\theta(P)
\prec_{\rho(P)} \theta_0(P) = \lambda x.\bot$.  $\textsc{Optimize}$
then calls $\textsc{Solve}(\hccs')$ to find an improved solution of
$\hccs$ (e.g., $\theta_1 = \set{ P\mapsto\lambda x. x < 0 }$).  In the
next iteration, $\textsc{Optimize}$ returns $\theta_1$ as a Pareto
optimal solution because
$\textsc{Improve}_{\prec}(\theta_1,\hccs_{\SUM}^{\bot})$ has no
solution.  \qedex
\end{example}

We now discuss properties of the procedure \textsc{Optimize} under the
assumption of the correctness of the sub-procedure \textsc{Solve}
(i.e., $\subst$ is a $\Theta$-restricted solution of $\hccs$ if
$\textsc{Solve}(\hccs)$ returns \textit{Sol}$(\subst)$, and $\hccs$
has no $\Theta$-restricted solution if \textsc{Solve}$(\hccs)$ returns
\textit{NoSol}).  The following theorem states the correctness of
\textsc{Optimize}.
\begin{theorem}[Correctness of the Procedure \textsc{Optimize}]
\label{thm:opt}
Let $(\hccs,\prec)$ be a $\Theta$-restricted Horn constraint
optimization problem.  If $\textsc{Optimize}(\hccs,\prec)$ returns
\textit{OptSol}$(\theta)$ (resp. \textit{Sol}$(\theta)$), $\theta$ is
a Pareto optimal (resp. possibly non-Pareto optimal)
$\Theta$-restricted solution of $\hccs$ with respect to $\prec$.
\end{theorem}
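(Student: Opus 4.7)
The plan is to trace the control flow of $\textsc{Optimize}(\hccs,\prec)$ and establish the loop invariant that at the top of every iteration of the \textbf{while} loop on line~\ref{pse:while_begin}, the variable $\theta$ holds a $\Theta$-restricted solution of the original $\hccs$. Combined with a bi-implicational specification of $\textsc{Improve}_{\prec}$---namely that the $\Theta$-restricted solutions of $\textsc{Improve}_{\prec}(\theta,\hccs)$ are exactly the $\Theta$-restricted solutions of $\hccs$ that are strictly smaller than $\theta$ under $\prec$---this invariant makes both halves of the theorem immediate.

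To justify the invariant, I would induct on the number of completed iterations. In the base case $\theta = \theta_0$ is obtained from the initial call $\textsc{Solve}(\hccs)$ on line~\ref{pse:solve}, so by the assumed correctness of $\textsc{Solve}$ it is a $\Theta$-restricted solution of $\hccs$. In the inductive step, $\theta$ is reassigned on line~\ref{pse:improve_sol} only when $\textsc{Solve}(\hccs')$ returns $\textit{Sol}(\theta')$ for $\hccs' = \textsc{Improve}_{\prec}(\theta,\hccs)$. Applying the soundness direction of the $\textsc{Improve}_{\prec}$ specification, $\theta'$ is a $\Theta$-restricted solution of $\hccs$ (and strictly improves on the old $\theta$), so the invariant is preserved.

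The two halves of the theorem then follow by inspecting the exit points. If $\textsc{Optimize}$ returns $\textit{Sol}(\theta)$ on line~\ref{pse:improve_unknown}, the invariant alone yields that $\theta$ is a $\Theta$-restricted solution of $\hccs$, which is all that is claimed. If $\textsc{Optimize}$ returns $\textit{OptSol}(\theta)$ on line~\ref{pse:improve_nosol}, then $\textsc{Solve}(\hccs')$ returned $\textit{NoSol}$, so by correctness of $\textsc{Solve}$ the set $\hccs'$ has no $\Theta$-restricted solution. Invoking the completeness direction of the $\textsc{Improve}_{\prec}$ specification---every $\Theta$-restricted solution of $\hccs$ strictly below $\theta$ under $\prec$ is a solution of $\hccs'$---rules out any witness to non-optimality, so $\theta$ is a Pareto optimal $\Theta$-restricted solution of $\hccs$.

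The main obstacle is formalizing the specification of $\textsc{Improve}_{\prec}$, which the paper describes only informally in prose and illustrates via Example~\ref{ex:optimize}. Concretely, for the lexicographic order $\prec = \prec_{(\rho,\sqsubset)}$, one must exhibit Horn-clause constraints that, added to $\hccs$, cut out precisely those $\theta'$ which fix every higher-priority predicate $Q \sqsubset P$ up to $\equiv_{\rho(Q)}$ while strictly improving $P$ in the direction $\rho(P)$---generalizing the pair of clauses $\{P(\seq{x}) \Leftarrow \theta(P)(\seq{x}),\ \exists \seq{x}.\,P(\seq{x}) \land \neg\theta(P)(\seq{x})\}$ seen in Example~\ref{ex:optimize} for the $\MAX$ case. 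Once this encoding lemma is in place, the rest of the correctness argument is a routine case analysis on the three possible return values of $\textsc{Solve}$.
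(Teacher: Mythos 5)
Your proof is correct and follows the route the paper itself intends: Theorem~\ref{thm:opt} is stated there without an explicit proof, relying implicitly on exactly the loop invariant (the current $\theta$ is always a $\Theta$-restricted solution of $\hccs$) and the two-sided specification of $\textsc{Improve}_{\prec}$ that you make explicit. You also correctly isolate the sole substantive obligation, namely the completeness direction of that specification (every $\Theta$-restricted solution of $\hccs$ strictly below $\theta$ under $\prec$ must solve $\hccs'$), which the paper describes only informally and by example; once that encoding lemma is granted, your case analysis on the return values of \textsc{Solve} settles both halves of the claim.
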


The following theorem states the termination of $\textsc{Optimize}$
for $\SubstAtom$-restricted Horn constraint optimization problems.
\begin{theorem}[Termination of the Procedure \textsc{Optimize}]
\label{thm:term}
Let $(\hccs,\prec_{(\sqsubset,\rho)})$ be a $\SubstAtom$-restricted
Horn constraint optimization problem.  It then follows that
\textsc{Optimize}$(\hccs,\prec_{(\sqsubset,\rho)})$ always terminates
if the sub-procedure \textsc{Solve} preferentially returns solutions
having smaller absolute values of coefficients.
\end{theorem}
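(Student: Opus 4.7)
The plan is to prove termination by contradiction, adapting the half-space argument from the proof sketch of Lemma~\ref{lem:exist} and then exploiting the small-coefficient preference of \textsc{Solve} together with the existence of a $\SubstAtom$-restricted Pareto optimal solution.

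Suppose $\textsc{Optimize}(\hccs, \prec_{(\sqsubset, \rho)})$ does not terminate. Then the \textbf{while} loop iterates infinitely, and by line~\ref{pse:improve_sol} one obtains an infinite strictly descending chain $\subst_0 \succ_{(\sqsubset, \rho)} \subst_1 \succ_{(\sqsubset, \rho)} \cdots$ of $\SubstAtom$-restricted solutions of $\hccs$. Since $\sqsubset$ is a strict total order on the finite set $\Pvar$, passing to an infinite subchain yields a pivot variable $P \in \Pvar$ such that $\subst_i(P) \succ_{\rho(P)} \subst_{i+1}(P)$ and $\subst_i(Q) \equiv_{\rho(Q)} \subst_{i+1}(Q)$ for every $Q \sqsubset P$ and every $i$.

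I would then reuse the half-space analysis from the proof sketch of Lemma~\ref{lem:exist}: writing each atomic predicate $\subst_i(P) \equiv \lambda \seq{x}.\ n_{0,i} + \sum_{j=1}^{\arity{P}} n_{j,i} x_j \geq 0$, the strictness of $\succ_{\rho(P)}$ together with the discreteness of integer solutions forces the normal vectors $(n_{j,i})_{j \geq 1}$ to be eventually positively proportional across $i$ (so the half-spaces become parallel) with constant terms drifting monotonically, whence $\max_j |n_{j,i}| \to \infty$ as $i \to \infty$. In particular, the absolute values of the coefficients in the substitutions returned by \textsc{Solve} are unbounded along the chain.

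The contradiction then comes from Lemma~\ref{lem:exist}: since $\hccs$ has a $\SubstAtom$-restricted solution, it has a $\SubstAtom$-restricted Pareto optimal solution $\subst^*$ whose coefficient magnitudes are at most some $K^*$. A lex-order case analysis, using Pareto optimality of $\subst^*$ (which rules out $\subst_i \prec \subst^*$) together with the stabilization of all $Q \sqsubset P$ in the subchain, shows that $\subst^* \prec_{(\sqsubset, \rho)} \subst_i$ for all sufficiently large $i$. At each such iteration, $\subst^*$ is a valid candidate for \textsc{Solve} at line~\ref{pse:generate}, so by the small-coefficient preference \textsc{Solve} must return a substitution whose coefficient magnitudes are at most $K^*$; from that point on the chain is confined to the (finite up to equivalence) set of $\SubstAtom$-restricted substitutions with coefficients bounded by $K^*$, contradicting the unboundedness above. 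The main obstacle will be making the ``for all sufficiently large $i$'' claim rigorous: one must rule out a tail of the chain in which $\subst^*$ and $\subst_i$ remain incomparable specifically at the pivot variable $P$, which requires a careful analysis of half-space normals at atomic predicates in the $\prec_{\rho(P)}$ order combined with the fact that no $\subst_i$ can $\prec$-dominate $\subst^*$.
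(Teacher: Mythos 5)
There is a genuine gap, and it is exactly at the step you flag as ``the main obstacle'': the claim that $\subst^* \prec_{(\sqsubset,\rho)} \subst_i$ for all sufficiently large $i$. Pareto optimality of $\subst^*$ only rules out $\subst_i \prec_{(\sqsubset,\rho)} \subst^*$; since $\prec_{(\sqsubset,\rho)}$ is merely a strict \emph{partial} order, this leaves open the possibility that $\subst^*$ and $\subst_i$ are incomparable for every $i$, in which case $\subst^*$ is never a solution of $\textsc{Improve}_{\prec}(\subst_i,\hccs)$ and the small-coefficient preference of \textsc{Solve} gives you nothing. Note that incomparability need not arise at the pivot $P$: already at some higher-priority $Q \sqsubset P$ the chain stabilizes to an equivalence class of atomic predicates that may be a non-parallel half-space relative to $\subst^*(Q)$, and then $\subst^*$ and every $\subst_i$ are lexicographically incomparable no matter what happens at $P$. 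An arbitrary Pareto optimal solution delivered by Lemma~\ref{lem:exist} carries no relation to the chain, so this cannot be repaired by ``a careful analysis of half-space normals at $P$'' alone.

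The fix — and what the paper actually does — is to replace $\subst^*$ by the \emph{limit} substitution constructed inside the proof sketch of Lemma~\ref{lem:exist}: the substitution that agrees (up to $\equiv_{\rho(Q)}$) with the chain on every $Q \sqsubset P$ and maps the pivot $P$ to $\lambda\seq{x}.\top$ if $\rho(P)=\MAX$ (i.e., $\lambda\seq{x}.0\geq 0$) or $\lambda\seq{x}.\bot$ if $\rho(P)=\MIN$ (i.e., $\lambda\seq{x}.-1\geq 0$). By continuity of $\hccs$ this is a $\SubstAtom$-restricted solution, and by construction it is \emph{comparable to, and strictly below,} every element of the tail of the chain — comparability is exactly what your $\subst^*$ lacks. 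Its coefficients have absolute value at most $1$, so the preference assumption forces \textsc{Solve} to return substitutions with coefficients bounded by $1$ at the pivot; since there are only finitely many such atomic predicates and the chain is strictly descending there, the loop terminates. Your coefficient-blowup observation (which does follow, more simply, from strict descent plus finiteness of bounded-coefficient atomic predicates) then closes the contradiction, but only once the comparable small-coefficient candidate is in hand.
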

\begin{proof}
Recall the proof sketch of Lemma~\ref{lem:exist}.  Any infinite
descending chain of $\SubstAtom$-restricted solutions for a predicate
variable $P$ with respect to $\prec_{(\sqsubset,\rho)}$ has a limit
$\lambda\seq{x}.\top$ if $\rho(P)=\MAX$ and $\lambda\seq{x}.\bot$ if
$\rho(P)=\MIN$.  Because $\lambda\seq{x}.\bot$
(resp. $\lambda\seq{x}.\top$) is expressed as an atomic predicate
$\lambda\seq{x}.-1 \geq 0$ (resp. $\lambda\seq{x}.0 \geq 0$) having
absolute values of coefficients not greater than $1$ and the number of
such predicates is finite, the limit is guaranteed to be reached in a
finite number of iterations.  \qed
\end{proof}

\subsection{Sub-Procedure \textsc{Solve} for Solving $\exists$HCCSs}
\label{sub:solve}

The pseudo-code of the sub-procedure \textsc{Solve} for solving
$\exists$HCCSs is presented in Figure~\ref{fig:code_solve}.  Here,
\textsc{Solve} uses existing template-based invariant generation
techniques based on Farkas' lemma~\cite{Colon2003,Gulwani2008a} and
$\exists$HCCS solving techniques based on
Skolemization~\cite{Unno2013,Beyene2013,Kuwahara2015}.  \textsc{Solve}
first generates a template substitution $\theta$ that maps each
predicate variable in $\pvs(\hccs)$ to a template atomic predicate
with unknown coefficients $c_0,\dots,c_{\arity{P}}$ (line
\ref{pse:init}).\footnote{In this way, the particular code is
  specialized to solve $\SubstAtom$-restricted Horn constraint
  optimization problems.  To solve $\Theta$-restricted optimization
  problems for other $\Theta$, we need here to generate templates that
  conform to the shape of substitutions in $\Theta$ instead.  Our
  implementation in Section~\ref{sec:exp} iteratively increases the
  template size.}  \textsc{Solve} then applies $\theta$ to $\hccs$ and
obtains a verification condition of the form $\exists \seq{c}.\forall
\seq{x}.\exists \seq{y}.\phi$ without predicate variables (line
\ref{pse:subst}).  \textsc{Solve} applies
Skolemization~\cite{Unno2013,Beyene2013,Kuwahara2015} to the condition
and obtains a simplified condition of the form $\exists
\seq{c},\seq{z}.\forall \seq{x}.\phi'$ (line \ref{pse:skolem}).
\textsc{Solve} further applies Farkas'
lemma~\cite{Colon2003,Gulwani2008a} to eliminate the universal
quantifiers and obtains a condition of the form $\exists
\seq{c},\seq{z},\seq{w}.\phi''$ (line \ref{pse:farkas}).
\textsc{Solve} then uses an off-the-shelf SMT solver to find a
satisfying assignment to $\phi''$ (line \ref{pse:smt}).  If such an
assignment $\sigma$ is found, \textsc{Solve} returns $\sigma(\theta)$
as a solution (line \ref{pse:solve_sat}).  \textsc{Solve} returns
\textit{Unknown} if the SMT solver returns \textit{Unknown} (line
\ref{pse:solve_unknown}).  Otherwise (no assignment is
found),\footnote{Note here that even though no assignment is found,
  $\hccs$ may have a $\SubstAtom$-restricted solution because Farkas'
  lemma is not complete for QFLIA
  formulas~\cite{Colon2003,Gulwani2008a} and Skolemization of $\exists
  \seq{c}.\forall \seq{x}.\exists \seq{y}.\phi$ into $\exists
  \seq{c},\seq{z}.\forall \seq{x}.\phi'$ here assumes that $\seq{y}$
  are expressed as linear expressions over
  $\seq{x}$~\cite{Unno2013,Beyene2013,Kuwahara2015}.}  \textsc{Solve}
uses the SMT solver again to find a satisfying assignment $\sigma$ to
$\forall\seq{x}.\exists \seq{y}.\phi$ (line \ref{pse:solve_unsat}).
If such a $\sigma$ is found, \textsc{Solve} returns $\sigma(\theta)$
as a solution (line \ref{pse:solve_unsat_sat}).  \textsc{Solve}
returns \textit{Unknown} if \textit{Unknown} is returned (line
\ref{pse:solve_unsat_unknown}) and \textit{NoSol} if \textit{Unsat} is
returned (line \ref{pse:solve_unsat_unsat}).

\begin{figure}[t]
 \centering
 \begin{minipage}[c]{0.76\textwidth}
 \begin{algorithmic}[1]
 \Procedure{Solve}{$\hccs$}
 \State \Let{$\subst$}{$\set{P \mapsto \lambda \seq{x}.
                                       c_0 + \Sigma_{i=1}^{\arity{P}} c_i \cdot x_i \geq 0 \mid P \in \pvs(\hccs)}$}\label{pse:init}
 \State \Let{$\exists \seq{c}.\forall \seq{x}.\exists \seq{y}.\phi$}{$\exists \seq{c}.\bigwedge_{\hc \in \hccs} \forall \fvs(\hc).\subst(\hc)$}\label{pse:subst}
 \State \Let{$\exists \seq{c},\seq{z}.\forall \seq{x}.\phi'$}{apply Skolemization to $\exists \seq{c}.\forall \seq{x}.\exists \seq{y}.\phi$}\label{pse:skolem}
 \State \Let{$\exists \seq{c},\seq{z},\seq{w}.\phi''$}{apply Farkas' lemma to $\exists \seq{c},\seq{z}.\forall \seq{x}.\phi'$}\label{pse:farkas}
 \State \Match{\textsc{SMT}($\phi''$)}\label{pse:smt}
 \State $\ \;\;$ \textit{Unknown} $\to$ \ret \textit{Unknown}\label{pse:solve_unknown}
 \State $\ \mid\;$ \textit{Sat}$(\sigma) \to$ \ret \textit{Sol}$(\sigma(\theta))$\label{pse:solve_sat}
 \State $\ \mid\;$ \textit{Unsat} $\to$ \Match{\textsc{SMT}($\forall
  \seq{x}.\exists \seq{y}.\phi$)}\label{pse:solve_unsat}
  \State $\qquad\qquad\qquad\ $ \textit{Unknown} $\to$ \ret \textit{Unknown}\label{pse:solve_unsat_unknown}
  \State $\qquad\qquad\quad\ \ \mid\;\,$ \textit{Sat}$(\sigma) \to$ \ret \textit{Sol}$(\sigma(\theta))$\label{pse:solve_unsat_sat}
  \State $\qquad\qquad\quad\ \ \mid\;\,$ \textit{Unsat} $\to$ \ret \textit{NoSol}\label{pse:solve_unsat_unsat}
 \EndProcedure
 \end{algorithmic}
 \end{minipage}
 \caption{Pseudo-code of the constraint solving method for $\exists$HCCSs based on template-based invariant generation}
 \label{fig:code_solve}
\end{figure}

\begin{example}
We explain how $\textsc{Solve}$ proceeds for $\hccs'$ in
Example~\ref{ex:optimize}.  $\textsc{Solve}$ first generates a
template substitution $\theta = \set{ P\mapsto\lambda x. c_0 +
  c_1\cdot x \geq 0}$ with unknown coefficients $c_0,c_1$ and applies
$\theta$ to $\hccs'$.  As a result, we get a verification condition
\[
\exists c_0,c_1.
\left(
\begin{array}{l}
\forall x.
 \left(
 \begin{array}{l}
 (\bot \Leftarrow c_0 + c_1\cdot x \geq 0\, \land\, x=0)\,\land\,\\
 (c_0 + c_1\cdot(x-1) \geq 0\Leftarrow c_0 + c_1\cdot x \geq 0\,\land\, x \neq 0)
 \end{array}
 \right)\,\land\,\\
\exists x.\ c_0 + c_1\cdot x \geq 0
\end{array}
\right)
\]
By applying Farkas' lemma, we obtain
\[
\exists c_0,c_1.\left(
\begin{array}{l}
\exists w_1, w_2, w_3 \geq 0.\ (c_0\cdot w_1 \leq -1 \land c_1\cdot w_1 + w_2 - w_3 = 0)\,\land\,\\
\exists w_4, w_5, w_6 \geq 0.
  \left(
  \begin{array}{l}
  (-1-c_0+c_1)\cdot w_4 + c_0\cdot w_5 - w_6 \leq -1\,\land\\
  c_1\cdot (-w_4 + w_5) + w_6 = 0
  \end{array}
  \right)\,\land\,\\
\exists w_7, w_8, w_9 \geq 0.
  \left(
  \begin{array}{l}
  (-1-c_0+c_1)\cdot w_7 + c_0\cdot w_8 - w_9 \leq -1\,\land\\
  c_1\cdot (-w_7 + w_8) - w_9 = 0
\end{array}
\right)\,\land\,\\
\exists x.\ c_0 + c_1\cdot x \geq 0
\end{array}
\right)
\]
By using an SMT solver, we obtain, for example, a satisfying
assignment
\[
\sigma=\left\{
\begin{array}{l}
c_0 \mapsto -1, c_1 \mapsto -1,
w_1 \mapsto 1, w_2 \mapsto 1, w_3 \mapsto 0,\\
w_4 \mapsto 0, w_5 \mapsto 1, w_6 \mapsto 1,
w_7 \mapsto 1, w_8 \mapsto 0, w_9 \mapsto 1
\end{array}
\right\}
\]
Thus, \textsc{Solve} returns $\sigma(\theta)=\set{P\mapsto\lambda
  x. -1-x \geq 0} \equiv \theta_1$ in Example~\ref{ex:optimize}.
\qedex
\end{example}

The following theorem states the correctness of the sub-procedure
\textsc{Solve}.
\begin{lemma}[Correctness of the Sub-Procedure \textsc{Solve}]
\label{lem:solve}
Let $\hccs$ be an $\exists$HCCS.  $\subst$ is a
$\SubstAtom$-restricted solution of $\hccs$ if \textsc{Solve}$(\hccs)$
returns \textit{Sol}$(\subst)$, and $\hccs$ has no
$\SubstAtom$-restricted solution if \textsc{Solve}$(\hccs)$ returns
\textit{NoSol}.
\end{lemma}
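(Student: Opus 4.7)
The plan is to do a case analysis on which return statement of \textsc{Solve} fires. Let $\theta$ denote the template substitution introduced on line~\ref{pse:init}. The crucial observation that underlies both directions is that every $\SubstAtom$-restricted substitution for $\hccs$ is of the form $\sigma(\theta)$ for a unique integer assignment $\sigma$ to the coefficient variables $\seq{c}$, and moreover $\sigma(\theta)$ is a solution of $\hccs$ if and only if $\sigma$ satisfies the formula $\forall \seq{x}.\exists \seq{y}.\phi$ (with $\seq{c}$ free) constructed on line~\ref{pse:subst}. This reduces the lemma to an analysis of how each branch of \textsc{Solve} relates to satisfiability of this formula.

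For the \textit{Sol}$(\sigma(\theta))$ cases, if \textsc{Solve} returns at line~\ref{pse:solve_sat}, then the SMT solver reports $\sigma \models \phi''$; by the soundness of Farkas' lemma as used in \cite{Colon2003,Gulwani2008a}, this yields $\sigma \models \forall \seq{x}.\phi'$, and by the soundness of the Skolemization step on line~\ref{pse:skolem} (which substitutes Skolem terms linear in $\seq{x}$ for $\seq{y}$), we obtain $\sigma \models \forall \seq{x}.\exists \seq{y}.\phi$, so $\sigma(\theta)$ is a $\SubstAtom$-restricted solution of $\hccs$. If instead \textsc{Solve} returns at line~\ref{pse:solve_unsat_sat}, the SMT solver directly certifies $\sigma \models \forall \seq{x}.\exists \seq{y}.\phi$, and the same conclusion follows. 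For the \textit{NoSol} direction, observe that \textit{NoSol} is returned only at line~\ref{pse:solve_unsat_unsat}, which requires the fallback SMT call on line~\ref{pse:solve_unsat} to return \textit{Unsat}; hence no assignment to $\seq{c}$ satisfies $\forall \seq{x}.\exists \seq{y}.\phi$, and by the correspondence above $\hccs$ has no $\SubstAtom$-restricted solution.

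The main subtlety, and the place I expect the write-up to be most error-prone, is the asymmetry between the two SMT calls. It arises because both Farkas' lemma and the restricted-linear form of Skolemization used on lines~\ref{pse:skolem}--\ref{pse:farkas} are sound but incomplete over QFLIA (as noted in the footnote attached to line~\ref{pse:solve_unsat}). Consequently, an \textit{Unsat} verdict on $\phi''$ alone does not rule out a $\SubstAtom$-restricted solution, which is precisely why the algorithm must fall back to a direct query on $\forall \seq{x}.\exists \seq{y}.\phi$ before committing to \textit{NoSol}. The remaining bookkeeping, namely that the construction on line~\ref{pse:subst} faithfully encodes solution-hood via conjunction of clauses universally closed over their free variables, is routine and can be discharged by unfolding the semantics of predicate substitution on each Horn clause.
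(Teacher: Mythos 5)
The paper itself gives no proof of Lemma~\ref{lem:solve} --- it is stated and left as an assertion --- so there is nothing to compare line by line; your argument is the natural one the authors implicitly rely on, and it is correct. You correctly identify the two load-bearing facts: (i) the encoding on line~\ref{pse:subst} is faithful, i.e.\ $\sigma(\theta)$ is a solution of $\hccs$ exactly when $\sigma$ satisfies $\forall\seq{x}.\exists\seq{y}.\phi$, which unfolds directly from the definition of a solution ($\models \subst\hc$ for each clause, i.e.\ validity of the universal closure, with the $\exists\seq{y}$ contributed by existentially quantified heads); and (ii) the Skolemization and Farkas steps are sound but incomplete, which is exactly why \textit{Sol} may be returned from either line~\ref{pse:solve_sat} or line~\ref{pse:solve_unsat_sat} but \textit{NoSol} only after the direct query on $\forall\seq{x}.\exists\seq{y}.\phi$ fails --- the asymmetry you flag is indeed the only subtle point, and the paper's own footnote on line~\ref{pse:solve_unsat} confirms this reading. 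One small overstatement: the assignment $\sigma$ corresponding to a given $\SubstAtom$-restricted substitution is not unique (distinct coefficient vectors can denote the same predicate, e.g.\ $x \geq 0$ and $2x \geq 0$), but your argument only needs existence of some such $\sigma$ for the \textit{NoSol} direction and membership of $\sigma(\theta)$ in $\SubstAtom$ for the \textit{Sol} direction, so nothing breaks. The only caveat worth making explicit is that the whole lemma is conditional on the soundness of the answers returned by the underlying SMT oracle, since the queries involve products $c_i \cdot x_i$ of unknowns and are thus outside decidable linear arithmetic.
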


Note that the sub-procedure \textsc{Solve} described above does not
necessarily satisfy the assumption of Theorem~\ref{thm:term}.  We can,
however, extend \textsc{Solve} to satisfy the assumption by bounding
the absolute values of unknown coefficients and iteratively
incrementing the bounds in SMT solving.

\section{Implementation and Experiments}
\label{sec:exp}

We have implemented a prototype refinement type optimization tool
based on our method.  Our tool takes OCaml programs and uses
Z3~\cite{Moura2008} as the underlying SMT solver in the sub-procedure
\textsc{Solve}.
We conducted preliminary experiments for each application presented in
Section~\ref{sec:app}.  All the experiments were conducted on a
machine with Intel Core i7-4650U 1.70GHz, 8GB of RAM.

The experimental results are summarized in
Tables~\ref{tbl:result_known} and \ref{tbl:result_all}.
Table~\ref{tbl:result_known} shows the results of an existing
first-order non-termination verification benchmark set used in
\cite{Larraz2014,Chen2014,Kuwahara2015}.  Because the original
benchmark set was written in the input language of T2
(\url{http://mmjb.github.io/T2/}), we used an OCaml translation of the
benchmark set provided by \cite{Kuwahara2015}.
Our tool was able to successfully disprove termination of 41 programs
(out of 81) in the time limit of 100 seconds.  Our prototype tool was
not the best but performed well compared to the state-of-the-art tools
dedicated to non-termination verification.

\begin{table}[t]
 \caption{The results of a non-termination verification benchmark set
   used in \cite{Larraz2014,Chen2014,Kuwahara2015}.  The results for
   \textsc{CppInv}, T2-TACAS, and TNT are according to Larraz et
   al.~\cite{Larraz2014}.  The result for MoCHi is according to
   \cite{Kuwahara2015}.}  \centering
 \begin{tabular}{|c||c|c|c|} \hline
  & Verified & TimeOut & Other \\ \hline
  Our tool & 41 & 27 & 13 \\ \hline
  \textsc{CppInv}~\cite{Larraz2014} & 70 & 6 & 5 \\ \hline
  T2-TACAS~\cite{Chen2014} & 51 & 0 & 30 \\ \hline
  TNT~\cite{Emmes2012} & 19 & 3 & 59\\ \hline
  MoCHi~\cite{Kuwahara2015} & 48 & 26 & 7 \\ \hline
 \end{tabular}
 \label{tbl:result_known}
\end{table}

\newcommand{\dist}{Disprove Termination}
\newcommand{\diss}{Disprove Safety}
\newcommand{\safe}{Prove Safety}
\newcommand{\term}{Prove Termination}

\newcommand{\ho}{H}
\newcommand{\fo}{F}

\begin{table}[t]
 \caption{The results of maximally-weak precondition inference for
   proving/disproving safety/termination.}
 \centering
 \begin{tabular}{|r|c|r|r|} \hline
  Program & Applications & \#Iterations & Time (ms) \\\hline
  \verb|fixpoint_nonterm|\cite{Kuwahara2015} & \dist & 1 & 2,020\\
  \verb|fib_CPS_nonterm|\cite{Kuwahara2015} & \dist & 1 & 5,023\\
  \verb|indirect_e|\cite{Kuwahara2015} & \dist & 1 & 1,083 \\
  \verb|indirectHO_e|\cite{Kuwahara2015} & \dist & 1 & 2,434 \\
  \verb|loopHO|\cite{Kuwahara2015} & \dist & 1 & 1,642 \\
  \verb|foldr_nonterm|\cite{Kuwahara2015} & \dist & 1 & 4,904 \\
\hline
  \verb|repeat| (Sec.~\ref{sec:safe}) &\safe & 4 & 948 \\
  \verb|sum_geq3| & \safe & 4 & 2,654 \\
  \verb|append| & \safe & 5 & 20,352 \\
\hline
  \verb|append| \cite{Kuwahara2014} & \term & 6 & 26,786 \\
  \verb|zip| \cite{Kuwahara2014} & \term & 13 & 76,641 \\
\hline
  $\SUMP$ (Sec.~\ref{sec:safe}) & \safe & 3 & 1,856 \\
  $\SUM$ (Sec.~\ref{sec:non-term}) & \dist & 1 & 174 \\
  $\SUMT$ (Sec.~\ref{sec:term}) & \term ($P \sqsubset \mathit{Inv}$) & 4 & 56,042 \\
  $\SUMT$ (Sec.~\ref{sec:term}) & \term ($\mathit{Inv} \sqsubset P$) & 3 & 6,628 \\
  $\SUMT$ (Sec.~\ref{sec:non-safe}) & \diss ($P \sqsubset \mathit{Inv}$) & $>2$ & 18,009\\
  $\SUMT$ (Sec.~\ref{sec:non-safe}) & \diss ($\mathit{Inv} \sqsubset P$) & $>2$ & 16,540 \\
  \hline
 \end{tabular}
 \label{tbl:result_all}
\end{table}

Table~\ref{tbl:result_all} shows the results of maximally-weak
precondition inference for proving safety and termination, and
disproving safety and termination.  In the column \#Iterations, $>2$
represents that the 3-rd iteration timed out and possibly non-Pareto
optimal solution was inferred by our tool.  We used non-termination
(resp. termination) verification benchmarks for higher-order programs
from \cite{Kuwahara2015} (resp. \cite{Kuwahara2014}).  The results
show that our method is also effective for safety and non-termination
verification of higher-order programs.  Our prototype tool, however,
could be optimized further to speed up termination and non-safety
verification.

\section{Related Work}
\label{sec:rel}

Type inference problems of refinement type
systems~\cite{Freeman1991,Xi1999} have been intensively
studied~\cite{Unno2008,Rondon2008,Unno2009,Kobayashi2011b,Jhala2011,Terauchi2010,Unno2013}.
To our knowledge, this paper is the first to address type optimization
problems, which generalize ordinary type inference problems.  As we
saw in Sections~\ref{sec:app} and \ref{sec:exp}, this generalization
enables significantly wider applications in the verification of
higher-order functional programs.

For imperative programs, Gulwani et al. have proposed a template-based
method to infer maximally-weak pre and maximally-strong post
conditions~\cite{Gulwani2008a}.  Their method, however, cannot
directly handle higher-order functional programs, (angelic and
demonic) non-determinism in programs, and prioritized multi-objective
optimization, which are all handled by our new method.

Internally, our method reduces a type optimization problem to a
constraint optimization problem subject to an existentially quantified
Horn clause constraint set ($\exists$HCCS).
Constraint \emph{solving} problems for $\exists$HCCSs have been
studied by recent work~\cite{Unno2013,Beyene2013,Kuwahara2015}.  They,
however, do not address constraint \emph{optimization} problems.
The goal of our constraint optimization is to maximize/minimize the
set of the models for each predicate variable occurring in the given
$\exists$HCCS.  Thus, our constraint optimization problems are
different from Max-SMT~\cite{Nieuwenhuis2006} problems whose goal is
to minimize the sum of the penalty of unsatisfied clauses.

\section{Conclusion}
\label{sec:concl}

We have generalized refinement type inference problems to type
optimization problems, and presented interesting applications enabled
by type optimization to inferring most-general characterization of
inputs for which a given functional program satisfies (or violates) a
given safety (or termination) property.  We have also proposed a
refinement type optimization method based on template-based invariant
generation.  We have implemented our method and confirmed by
experiments that the proposed method is promising for the
applications.

\bibliography{abbrv,prog_lang,bib}
\bibliographystyle{abbrv}

\end{document}